\documentclass[reprint,superscriptaddress,prx]{revtex4-2}

\usepackage{bm}
\usepackage{amsmath}
\usepackage{graphicx}
\usepackage{subfigure}
\usepackage{bbold}
\usepackage{braket}
\usepackage[hidelinks]{hyperref}
\usepackage{verbatim}
\usepackage[english]{babel}
\usepackage{amsthm}
\newtheorem{theorem}{Theorem}
\newtheorem{lemma}{Lemma}
\setlength {\marginparwidth }{2cm}
\usepackage[colorinlistoftodos]{todonotes}
\usepackage{multirow}
\usepackage{xcolor}
\usepackage{comment}

\newcommand{\be}{\begin{equation}}
\newcommand{\ee}{\end{equation}}  
\newcommand{\ls}{$\ket{\vec{n}}$ }

\definecolor{Eden}{RGB}{255,0,0}

\begin{document}

\title{Efficient Lattice Hamiltonian Encoding for the Shortest Vector Problem}
\author{Eden Schirman}
\affiliation{Physics Department, Imperial College London SW7 2BU, United Kingdom}
\affiliation{Raymond and Beverly Sackler School of Physics and Astronomy, Tel-Aviv University, Tel-Aviv 6997801, Israel}
\author{Cong Ling}
\affiliation{Electrical and Electronic Engineering Department, Imperial College London SW7 2BU, United Kingdom}
\author{Florian Mintert} \affiliation{Physics Department, Imperial College London SW7 2BU, United Kingdom}
\affiliation{Helmholtz-Zentrum Dresden-Rossendorf, Bautzner Landstraße 400, 01328 Dresden, Germany}
\date{\today}

\begin{abstract}

The advent of quantum computing necessitates the transition of worldwide cryptosystems to post-quantum cryptography (PQC),
which is founded upon the problem of finding short vectors in high-dimensional structured lattices.
It is assumed that the structure of these lattices cannot be exploited by quantum or classical algorithms attempting to find short vectors.
In this work, we focus on the structure of the lattices used in PQC protocols — nega-cyclic (and cyclic)lattices — and provide a quantum algorithmic framework that efficiently encodes the structured lattices into Hamiltonians by exploiting their underlying symmetries.
The efficient encoding substantially reduces the dimension of the corresponding Hilbert space by limiting it to a relevant subspace where short vectors are likely to be found — leading to significant savings in quantum resources (\textit{e.g.} 
qubit count and circuit depth) required to implement a quantum algorithm for finding short vectors.
We analytically prove the efficient encoding procedure and benchmark the proposed framework using the variational quantum eigensolver, demonstrating improved results with reduced quantum resources.

\end{abstract}

\maketitle

\section{Introduction}
Quantum computing poses a substantial threat to the worldwide cryptosystem \cite{RSA, JosephNaturePQCJoseph} as Shor's algorithm can efficiently solve the mathematical problem on which most cryptographic protocols are based — factoring \cite{Shor}. This threat will become tangible with the availability of fault-tolerant quantum computers \cite{EFTQC, PhysRevApplied.21.040501}. To counter it, the American National Institute of Standards and Technology (NIST) announced in 2024 a new set of cryptographic protocols believed to be secure against both quantum and classical attacks \cite{NISTIR.8547}.

The main post-quantum cryptography (PQC) protocols are based on the mathematical problem of finding the shortest non-zero vector in a high-dimensional lattice — the shortest-vector problem (SVP) \cite{IntroductionToPQCBook}. That is, given a non-orthonormal basis for a lattice, the task is to find the minimal distance between two distinct lattice points. PQC protocols are actually based on a softer version of the problem \cite{NISTIR.8547}, the approximate SVP — one where the goal is to find sufficiently short vectors whose length exceeds that of the shortest vector by at most a given polynomial factor of the lattice dimension. 

To increase the efficiency of these protocols, lattices with some special structure — structured lattices — are used \cite{PeikertDecadeOfLatticeCrypto}.
It is conjectured that the structure of these lattices does not provide any advantage to quantum (or classical) attacks \cite{IntroductionToPQCBook}, i.e., it is as hard to solve the approximate SVP for these lattices as it is for general lattices, even with a quantum computer.
Since this is still an unproven conjecture, the search for counterexamples is still open ~\cite{Cramer2021,ePrint.2025.304}.

Exploiting the structure and symmetries of a system for efficient encoding is a cornerstone technique in quantum physics. Tensor networks provide an efficient encoding of many-body quantum states by exploiting system structure \cite{VerstateTN,TNManyBodySystems}, and are used in multiple disciplines such as condensed matter physics \cite{TNCMP}, quantum chemistry \cite{ChemistryTN1},  and quantum information science \cite{TNQIS}. Quantum simulation, which deals with simulating one quantum system using another, also relies heavily on exploiting system structure and symmetries for efficient implementation \cite{QSMB, QSOpportunities, QSPractical}.

Hamiltonian encodings of lattices for the SVP have been proposed and used in previous works by some of us \cite{Joseph2TwoQuantumIsingModels, Rocky} and others \cite{JosephPRR, VQASVP, SVPAlgo2018, QALattice2024, EldarHallgren2022, MizunoWatabe2024, Engelberts2025}. To the best of our knowledge, this is the first work that leverages lattice structure for efficient Hamiltonian encoding, leading to a significant reduction in the quantum resources required for finding short lattice vectors.
In addition to notable savings in qubit count and circuit depth, we show that our method increases the probability of finding short vectors.

The rest of the paper is organized as follows. The necessary preliminaries regarding lattices are described in Sec.~\ref{sec:lattices}. The symmetries of cyclic and nega-cyclic lattices that are reflected in the quantum formulation of the SVP are described in Sec.~\ref{sec:symmetries}. The exploitation of these symmetries for efficient Hamiltonian encoding — the main result of our work — is the subject of Sec.~\ref{sec:reduced problem hamiltonians}, while Sec.~\ref{sec:constraints matrices} gives an explicit detailed description of the proposed method. Benchmarking of the method using the variational quantum eigensolver \cite{VQE} is presented in Sec.~\ref{sec: benchmarking}.

\section{Lattices}
\label{sec:lattices}

A lattice is a periodic arrangement of points in an $N$-dimensional Euclidean space. It can be defined in terms of a basis $B$ comprised of $N$ linearly independent vectors $\vec{b}_i$. While any point in the $N$-dimensional space can be expressed as a linear combination $\sum_in_i\vec{b}_i$ of the basis vectors, lattice vectors are characterized by the constraint that the expansion coefficients $n_i$ are integers. The collection of all lattice vectors is the lattice.

\subsection{Structured lattices}

Cyclic and nega-cyclic lattices are lattices with additional symmetries. Historically, cyclic lattices were firstly used in the NTRU cyptosystem \cite{hoffstein_ntru:_1998} and have received much attention since then \cite{Micciancio-cyclic,Cyclic-lattice}. Meanwhile,  cryptographic protocols in the NIST final list are actually based on nega-cyclic lattices \cite{NistThirdRoundReport,lyubashevsky_ideal_2013}. Since, however, situations with cyclic symmetries are better known in physics and since the classification for the cyclic case is rather instructive for the subsequent discussion of the nega-cyclic case, we will study both cyclic and nega-cyclic lattices. 

Their formal definition requires the cyclic permutation operator $\Pi$ such that $\Pi \vec{b}=(b_{N-1},b_0,b_1,\hdots,b_{N-2})^T$ for any vector $\vec{b}=(b_0,b_1,\hdots,b_{N-1})^T$, and the nega-cyclic permutation operator $\Gamma$, such that $\Gamma \vec{b}=(-b_{N-1},b_0,b_1,\hdots,b_{N-2})^T$.
A basis of cyclic and nega-cyclic lattices can be defined in terms of a single vector $\vec b_0$ via the relation $\vec{b}_{i+1}=\Pi \vec{b}_i$ for cyclic lattices and $\vec{b}_{i+1}=\Gamma \vec{b}_i$ for nega-cyclic lattices.

Most of the relevant properties of a lattice are encoded in its Gram matrix $G$,
comprised of all the scalar products of basis vectors;
i.e., the matrix with the elements $G_{ij}=\vec{b}_i \vec{b}_j$. 

The symmetries of the bases of cyclic and nega-cyclic lattices are also reflected in the corresponding Gram matrices, i.e., $G=\Pi G \Pi^\dagger$ for cyclic lattices and $G=\Gamma G \Gamma^\dagger$ for nega-cyclic lattices. From the unitarity of $\Pi$ and $\Gamma$ also follows the commutativiy $[G,\Pi]=0$ for cyclic and $[G,\Gamma]=0$ for nega-cyclic lattices, so that the Gram matrix $G$ shares a joint set of eigenvectors with $\Pi$ or $\Gamma$ respectively.

Since the $N$-th power of the permutation reduces to the identity, the eigenvalues of $\Pi$ are given by $\exp(2\pi i j/N)$ with $j=0,\hdots, N-1$.
Analogously, the eigenvalues of $\Gamma$ are given by $\exp(\pi i (2j+1)/N)$ with $j=0,\hdots, N-1$, because of the relation $\Gamma^N=-{\bf 1}$. 

The matrix $U^{(c)}$ containing all eigenvectors of $\Pi$ is given by the Fourier transformation matrix with elements
\begin{subequations}
\label{eq:U}
\be
   U^{(c)}_{qp}=\frac{1}{\sqrt{N}}\exp\left(-i\frac{2\pi}{N}pq\right)\ ,
\ee
and the matrix $U^{(n)}$ containing all eigenvectors of $\Gamma$ has the elements
\be
    U^{(n)}_{qp}=\frac{1}{\sqrt{N}}\exp\left(-i\frac{2\pi}{2N}p\left(2q+1\right)\right)\ 
\ee
for $0\leq p, q \leq N-1$.
\end{subequations}

\subsection{Length of lattice vectors}

Since the squared Euclidean length of any lattice vector $\vec{b}=\sum_in_i\vec{b}_i$ can be expressed as
\be
\left|\vec{b}\right|^2=\sum_{ij}G_{ij}n_in_j
\ee
in terms of the Gram matrix, also a quantum mechanical Hamiltonian
\be
\label{eq:H}
\hat{H}=\sum_{ij}G_{ij}\hat{N}_i\hat{N}_j
\ee
with mutually commuting operators $\hat{N}_i$ that have integer spectra, can encode the length of all the vectors in a given lattice \cite{Joseph2TwoQuantumIsingModels}. The eigenstates of $\hat{H}$ are simultaneous eigenstates of all the operators $\hat{N}_i$ and read $\ket{\vec{n}}=\ket{n_0,n_1,\hdots,n_{N-1}}$, where $n_i$ denotes the eigenvalue of $\hat{N}_i$. 

\section{Symmetries in the problem Hamiltonian}
\label{sec:symmetries}

The quantum formulation of the approximate SVP is to find a low excited-state of the problem Hamiltonian given in Eq.~\eqref{eq:H}. This formulation is applicable for any lattice, irrespective of potential symmetries. 

Our central goal is to exploit the symmetries of cyclic and nega-cyclic lattices in order to formulate problem Hamiltonians that are beneficial for the approximate SVP. 
Their main benefit is the reduction in the number of operators $\hat{N}_q$ required for their representation. This yields a
reduction in the quantum resources needed in practical implementations on quantum devices. 

To this end, it is crucial to notice that any Hamiltonian of the form in Eq.~\eqref{eq:H} can be expressed as
\be
\hat{H}=\sum_qg_q\hat{S}_q^\dagger \hat{S}_q\ ,
\label{eq:Hd}
\ee
in terms of the eigenvalues $g_q$ of the Gram matrix and the operators
\be
\hat{S}_q=\sum_pU_{qp}\hat{N}_p\ ,
\ee
defined in terms of the unitary matrix $U$ comprised of the eigenvectors of $G$. 
Commutativity of the operators $\hat{N}_q$ implies that also the operators $\hat{S}_q$ commute. Additionally, Hermiticity of the operators $\hat{N}_p$ implies $[\hat{S}_q,\hat{S}^\dagger_p]=0$. The spectrum of $\hat{H}$ can thus be characterised completely by the simultaneous eigenstates of all the individual operators $\hat{S}_q$.

While in the case of a general lattice, the eigenvectors of $G$ depend on the details of the lattice, in the case of cyclic and nega-cylic lattices, the eigenvectors are lattice-independent and are given by Eq. ~\eqref{eq:U}. Thus the operators $\hat{S}_q$ are completely determined by the type of symmetry and dimension of the lattice.

Since properties of a specific lattice enter the system Hamiltonian in Eq.~\eqref{eq:Hd} only via the eigenvalues $g_q$, there is 
the prospect of facilitating the quest for short vectors in terms of the eigenspaces of the operators $\hat{S}_q$.

By construction, any quantum state $\ket{\vec{n}}$ is an eigenstate of $\hat{S}_q$ and of $\hat{S}_q^{\dagger}\hat{S}_q$ with the corresponding eigenvalues $s^q_{\vec{n}}$ and $|s^q_{\vec{n}}|^2$. The eigenvalue of the problem Hamiltonian associated with the state $\ket{\vec{n}}$ is thus given by 

\be
\label{eq:E}
E_{\vec{n}} =\sum_qg_q|s^{q}_{\vec{n}}|^2\ .
\ee

The eigenvalues $g_q$ of the Gram matrix are non-negative due to its positive-semidefiniteness. Thus the eigenvalues $E_{\vec{n}}$ of the problem Hamiltonian (and equivalently the lengths of the lattice vectors) are given as a sum over non-negative terms. Each of these terms is associated with one of the operators $\hat{S}_q$.

One might expect that the Hamiltonian's first excited-state (which corresponds to the shortest non-zero vector of the lattice) is a ground state of most operators $\hat{S}_q$, and a low-lying excited state of only one or a few of the operators. This, however, is not necessarily the case, because the condition that a given state $\ket{\vec n}$ is a ground state of one operator $\hat S_q$ might imply that $\ket{\vec n}$ is a highly excited eigenstate to some other operator $\hat S_p$.

Still, in practice, there is a high probability of finding a low-excited state of the problem Hamiltonian in the kernel of the operator $\hat{S}_q$ which corresponds to the largest eigenvalue $g_q$ of the Gram matrix. Such an operator is referred here as the principal operator. 

In order to provide a basis for the subsequent rigorous construction of low-lying eigenspaces, we will first give numerical evidence that the first excited state (corresponding to the shortest vector of the lattice) or a sufficiently low-excited state, lies within the kernel of the principal operator. Table \ref{table:numerical results} summarizes the results for multiple configurations of nega-cyclic lattices.

The first three columns from the left of Table \ref{table:numerical results} denote the configuration of the lattices examined. The first column denotes the dimension $N$ of the lattices, while the second column denotes the number of lattices examined for that configuration. Due to finite computational resources, only limited range of values for the integer coefficients $n_q$ of the lattice vectors can be examined. The range is given by $[-2^{K-1},2^{K-1}-1]$ according to the range parameter $K$, which is denoted on the third column.

The percentage of lattices for which the shortest vector of the lattice lies within the kernel of the principal operator is depicted in the forth column. For example, out of the five hundred 5-dimensional lattices examined (depicted in the first row), for 216 lattices (which constitute $43.2\%$) the shortest vector lies within the kernel of the principal operator. The percentages for the different configurations range from $19.2\%$ (for the 10-dimensional lattices) to $43.2\%$ (for the 5-dimensional lattices). These high numbers are an indication towards the high probability of finding the shortest vector of the lattice (first excited state of the problem Hamiltonian) within the kernel of the principal operator.

The ratio between the length of a general lattice vector to the length of the shortest (non-zero) vector of the lattice is denoted by $\gamma$. The fifth column in Table \ref{table:numerical results} depicts the percentile-90 of the $\gamma$ values for the first excited states within the kernels of the principal operators. This is the $\gamma$ value for which $90\%$ of the lattices examined have lower $\gamma$ values. For example, in the case of $5-$dimensional lattices (first row in Table ~\ref{table:numerical results}), for 450 lattices out of the 500 examined (i.e., $90\%$), the Hamiltonian's eigenvalue corresponding to the first excited state within the kernel of the principal operator (equivalently the length of the corresponding lattice vector), was smaller than $7.96$ times the length of the shortest vector of the lattice.

The percentile-90 values range from $3.14$ (for 6-dimensional lattices) to $19.71$ (for 11-dimensional lattices). For all configurations however, the value is smaller than two times the lattice dimension ($2N$). This further motivates the likelihood of finding low excited states within the kernel of the principal operator.

\begin{table}[htb]
    \centering
    \begin{tabular}{|c|c|c||c|c|c|}
    \hline 
    $N$ & $\#Lattices$ & $K$ & $\#(\gamma=1)$/$\#Lattices$ $[\%]$ & $P_{90}$\\
    \hline \hline
    5 & 500 & 4 & 43.2 &  7.96\\ \cline{1-5}
    6 &1000 & 3& 37.5 & 3.14\\ \cline{1-5}
    7 & 500 & 3&  34.8 &  12.75\\ \cline{1-5}
    9 &1000 & 2&  47.4 & 3.29\\ \cline{1-5}
    10 & 500 & 2& 19.2 &  7.05\\ \cline{1-5}
    11& 100 & 2 &  25.0 &  19.71\\ \hline
    \end{tabular}
    \caption{A summary of the numerical results for nega-cyclic lattices.
    The first three columns depict the lattice dimension $N$, the number of analyzed lattices and the number $K$ of qubits per lattice dimension.
    The forth column depicts the percentage of lattices for which the shortest vector of the lattice lies within the kernel of the principal operator. The last column depicts the percentile-90 value of the values of $\gamma$ of the first excited states within the kernels of the principal operators.} 
    \label{table:numerical results}
\end{table}

The resulting prospect that low excited states (which correspond to short lattice vectors) lie within the kernel of an operator $\hat S_q$, motivates the formulation of a reduced problem Hamiltonian which represents the problem Hamiltonian with the restriction to the kernel. These reduced problem Hamiltonians
are derived in Sec.~\ref{sec:reduced problem hamiltonians}, and the kernels themselves are derived in Sec.~\ref{sec:constraints matrices} in the form of constraints matrices.

\section{Reduced Problem Hamiltonians}
\label{sec:reduced problem hamiltonians}

The likelihood of finding low excited states within the kernels of the operators $\hat{S}_q$,
motivates the definition of reduced problem Hamiltonians that allow for the search of short vectors within such kernels.
This formulation leverages the symmetries of cyclic and nega-cyclic lattices and leads to a more efficient search for low excited states of the problem Hamiltonian, i.e., for short lattice vectors.

A lattice state \ls is an eigenstate of the operator $\hat{S}_q=\sum_pU_{qp}\hat{N}_p$ with the corresponding eigenvalue $s_{\vec{n}}^q=\sum_pU_{qp}n_p$. The kernel of $\hat{S}_q$ is thus given by the span of all states \ls with integer coefficients $n_p$ that satisfy
\be
\label{eq:kernel}
\sum_{p=0}^{N-1}U_{qp}n_p=0\ .
\ee
In both the cyclic and the nega-cyclic cases, the complex coefficients $U_{qp}$ are of the form of phase factors and a constant factor $1/\sqrt{N}$ that has no influence on the solution of Eq.~\eqref{eq:kernel}. 

Since the general solution of linear algebraic equations is given by a linear combination of individual solutions,
the characterization of the kernel of an operator $\hat{S}_q$, can be represented by all integer vectors $\vec{n}=(n_0,n_1,\dots,n_{N-1})^T$ that satisfy the vector equation
\be
\label{eq:nAm}
\vec{n} = A \vec{m}\ ,
\ee
for an $\left(N \times M\right)-$dimensional constraints matrix $A$, and any $M-$dimensional integer vector $\vec{m}=(m_0,m_1,\dots,m_{M-1})^T$, with $M<N$. That is, the solutions to Eq.~\eqref{eq:kernel} are given by a constraints matrix $A$, which relates the $M$ degrees of freedom represented by any integer vector $\vec{m}$, to the vector of integers $\vec{n}$ according to Eq.~\eqref{eq:nAm}. The analytical characterizations of the constraints matrices, i.e., the explicit solutions to Eq.~\eqref{eq:kernel}, are derived in Sec.~\eqref{sec:constraints matrices}.

While the problem Hamiltonian for an $N-$dimensional lattice is given in terms of $N$ operators $\hat{N}_i$ with $\hat{H}=\sum_{ij}G_{ij}\hat{N}_i\hat{N}_j$, the characterization of the kernels of the operators $\hat{S}_q$ with associated constraints matrices $A$,  can be exploited to represent the problem Hamiltonian with a reduced number of operators $\hat{N}_i$. 

The eigenvalues of the problem Hamiltonian (equivalent to the length of lattice vectors)
that lie within the kernel of $\hat{S}_q$ with the constraints matrix $A$, are given by
\begin{subequations}
\label{eq:reduced H}
    \be
        \sum_{i,j=0}^{N-1}G_{ij}n_in_j = \sum_{i,j=0}^{M-1}F_{ij}m_im_j\ ,
    \ee 
where
\be 
\label{eq:F}
    F = A^{\dagger}GA\ .
\ee
\end{subequations}

The reduced problem Hamiltonian 
\be
\hat{H} = \sum_{i,j=0}^{M-1}F_{ij}\hat{N}_i\hat{N}_j\ ,
\label{eq:Hr}
\ee
thus represents the problem with the restriction to the kernel of $\hat{S}_q$, just like the original problem Hamiltonian formulation defined in Eq.~\eqref{eq:H}. The reduction from $N$ to $M$ operators $\hat{N}_i$, however, implies a substantial reduction in the quantum resources required in an explicit implementation, such as qubits count and circuit depth, as discussed in Sec.~\ref{sec: benchmarking}.

\section{Characterization of the Constraints Matrix}
\label{sec:constraints matrices}

In this section, the constraints matrices $A$ which lead to the formulation of reduced problem Hamiltonians (Eq.~\eqref{eq:Hr}) are derived. First, the most general solution to Eq.~\eqref{eq:kernel} are characterized in Sec.~\ref{subsec:kernels} in terms of the constraints matrices. That is, the kernels of all operators $\hat{S}_q$ are characterized. 

These characterizations require a numerical method to find a solution. Therefore, in Sec.~\ref{sec:Explicit Characterization of Subspaces Within The Kernels}, another method for deriving explicit analytical solutions to Eq.~\eqref{eq:kernel} is described. These solutions yield constraints matrices which are used for explicitly representing reduced problem Hamiltonians with restrictions to subspaces within the kernels of the operators $\hat{S}_q$.

\subsection{Characterization of The Kernels}
\label{subsec:kernels}

For lattices with cyclic symmetry, the kernel of an operator $\hat{S}^{(c)}_q$ is spanned by the states $\ket{n_0,n_1,\hdots, n_{N-1}}$ with integers $n_p$ that solve the explicit form
\begin{subequations}   
\be
\label{eq:kernelcyclic} 
\vec{U}_q^{(c)}\vec{n} = 0
\ee
of Eq.~\eqref{eq:kernel} for cyclic lattices, with 
\be
\vec{U}_q^{(c)} =
\begin{pmatrix}
1 & e^{-i\frac{2\pi}{N}q} & e^{-i\frac{2\pi}{N}2q} & \hdots & e^{-i\frac{2\pi}{N}q(N-1)}
\end{pmatrix}^T\ ,
\ee
\end{subequations}
and $\vec{n}=(n_0,n_1,\dots,n_{N-1})^T$. Therefore characterizing the orthogonal space to the vector $\vec{U}_q^{(c)}$ over the integers is equal to characterizing the kernel of the operator $\hat{S}_q^{(c)}$.

The following characterization refers to the case where the greatest common divisor (gcd) of the dimension $N$ and the index $q$ is unity, i.e., $\gcd(N,q)=1$. In this case, 
the linear space spanned by the elements of $\vec{U}_q^{(c)}$ over the integers (the complementary subspace to the orthogonal space)
has dimension $\varphi(N)$~\cite{Marcus_1977}. The function $\varphi(N)$ is Euler's \emph{totient function}, which is defined as the number of integers between $1$ and $N-1$ coprime with $N$ (\textit{e.g.} for $N=6$, $\varphi(6)=2$ since only $1$ and $5$ are co-prime with $6$). The quantity $N-\varphi(N)$ is also known as the \emph{co-totient function}. 

Because the space spanned by $\vec{U}_q^{(c)}$ over the integers has dimension $\varphi(N)$, there are $\varphi(N)$ independent exponential components in the vector $\vec{U}_q^{(c)}$, and all other exponentials can be written as their integer linear combination ~\cite{Bosma,Marcus_1977}. Explicitly, the relation is given by
\be
\left(
  \begin{array}{c}
    e^{-i\frac{2\pi}{N}q(\varphi(N))} \\
    e^{-i\frac{2\pi}{N}q(\varphi(N)+1)} \\
    \vdots \\
    e^{-i\frac{2\pi}{N}q(N-1)} \\
  \end{array}
\right) = C 
\left(
  \begin{array}{c}
    1 \\
    e^{-i\frac{2\pi}{N}q} \\
    \vdots \\
    e^{-i\frac{2\pi}{N}q(\varphi(N)-1)} \\
  \end{array}
\right)\ ,
\ee
with the matrix $C$ being an integer matrix of size $(N-\varphi(N)) \times \varphi(N)$.

Therefore, Eq.~\eqref{eq:kernelcyclic}, for the case of $\gcd(N,q)=1$, can be re-written as
\begin{subequations}    
\be
\label{eq:kernel_omega_1}
  \begin{pmatrix}
    1 & e^{-i\frac{2\pi}{N}q} & \hdots & e^{-i\frac{2\pi}{N}q(\varphi(N)-1)} 
\end{pmatrix} \left( I \ |\ C^T\right)\vec{n} = 0\ ,
\ee
where $I$ is the identity matrix of size $\varphi(N) \times \varphi(N)$ and $\left( I \ |\ C^T\right)$ stands for a flat matrix concatenating $I$ and $C^T$. The phase factors $\{1,\exp\left(-i\frac{2\pi}{N}q\right), \hdots, \exp\left(-i\frac{2\pi}{N}q(\varphi(N)-1)\right)\}$ are linearly independent over the integers.
Eq.~\eqref{eq:kernel_omega_1} is thus satisfied if and only if
\be
\label{eq:kernel_omega_2}
\left( I \ |\ C^T\right)\vec{n} = \vec{0}\ .
\ee 
\end{subequations}

The solution of Eq.~\eqref{eq:kernel_omega_2} in terms of the $N-$dimensional integer vector $\vec{n}$ can be found from the {Hermite normal form} \cite[Chap. 2]{1993-cohen} of the matrix $\left( I \ |\ C^T\right)$. Given an integer matrix $X$, there is a unique matrix $Y$ in the Hermite normal form, in the form of $Y=XU$ where $U$ is a unimodular matrix, i.e., an integer matrix whose determinant is $\pm 1$. The matrix $Y$ is upper triangular, that is $Y_{ij}=0$ for $j<i+N-\varphi(N)$, and
$Y_{ij} < Y_{ii}$ for $j>i+N-\varphi(N)$. The Hermite normal form can be found numerically in polynomial time \cite[Chap. 2]{1993-cohen}.

{The Hermite normal form
\be
{Z} =  \left( {I}\  | \ {C}^T  \right) {U}
\ee
of the matrix $\left( I \ |\ C^T\right)$
defines a matrix $Z$ whose first $N-\varphi(N)$ columns contain only zeros.
}
We denote the first $N-\varphi(N)$ columns of the matrix ${U}$ by the submatrix $A$. 

Given the matrix $A$, the solution to Eq.~\eqref{eq:kernel_omega_2} is given by Eq.~\eqref{eq:nAm} with $\vec{n}=A\vec{m}$ ~\cite[Prop.~2.4.9]{1993-cohen}, for any $\left( N-\varphi(N)\right)$-dimensional integer vector $\vec{m}=\left(m_0,m_1,\dots,m_{N-\varphi(N)-1}\right)^T$. That is, the matrix $A$ fully characterizes the orthogonal space of the vector $\vec{U}_q^{(c)}$. Therefore, the kernel of the corresponding operator $\hat{S}^{(c)}_q$ (for the case of $\gcd(N,q)=1$) is given by all states \ls such that $\vec{n}=A\vec{m}$, for an arbitrary integer vector $\vec{m}$ and the constraints matrix $A$.

The above characterization is simply extended to cyclic operators $\hat{S}_q^{(c)}$ for which the greatest common divisor of $N$ and $q$ is greater than unity (as explained in Appendix \ref{appendix:extension}) as well as for the nega-cyclic operators $\hat{S}_q^{(n)}$, as explained in the following. For all cases however, the solution is fully determined by a constraints matrix $A$, which is used to derive reduced problem Hamiltonians according to Eq.~\eqref{eq:Hr} for the specified kernels.

For lattices with nega-cyclic symmetry, the defining equation of the kernels of the operators $\hat{S}^{(n)}_q$ 
is
\be
\label{eq:appendix negac-cyclic}
    \sum_{p=0}^{N-1}n_p\exp\left(-i\frac{\pi}{N}p\left(2q+1\right)\right)=0\  ,
\ee
(a special case of Eq.~\eqref{eq:kernel}). Here, let us assume $q=0$ for convenience, whereas the general case $q>0$ is handled in Appendix \ref{appendix:extension}. We proceed as in the case of cyclic lattices, with the difference that there are $\varphi(2N)$ independent exponential components. Again, all other exponentials can be written as their integer linear combination:
\be
\left(
  \begin{array}{c}
    e^{-i\frac{\pi}{N}(\varphi(2N))} \\
    e^{-i\frac{\pi}{N}(\varphi(2N)+1)} \\
    \vdots \\
    e^{-i\frac{\pi}{N}(N-1)} \\
  \end{array}
\right) = C 
\left(
  \begin{array}{c}
    1 \\
    e^{-i\frac{\pi}{N}} \\
    \vdots \\
    e^{-i\frac{\pi}{N}(\varphi(2N)-1)} \\
  \end{array}
\right)\ ,
\ee
with the matrix $C$ being an integer matrix of size $(N-\varphi(2N)) \times \varphi(2N)$. Then we can apply the same procedure to find the Hermite normal form, hence the kernel. Now, the kernel has dimension $N-\varphi(2N)$ rather than $N-\varphi(N)$.

\subsection{Explicit Characterization of Subspaces Within The Kernels}
\label{sec:Explicit Characterization of Subspaces Within The Kernels}

The characterization of the kernels described in Sec.~\ref{subsec:kernels} requires the use of numerical methods in order to find the constraints matrix $A$ which represents the restriction to a specific kernel. In this section however, the explicit analytical characterization of subspaces within the kernels is described. These are used for explicitly formulating reduced problem Hamiltonians restricted to these subspaces.

While the explicit characterization of these subspaces will turn out a bit cumbersome, it can be understood to some extent by the observation that an equation $\sum_pn_p \exp\left(i\phi_p\right)=0$ with the phases $\phi_p$ evenly distributed over the unit circle, has nontrivial solutions for the integer coefficients $n_p$. 

If there are $N$ evenly distributed phases, i.e., $\phi_p = \frac{2\pi}{N}p$, and $N$ is a prime number, then the general solution of the equation $\sum_pn_p\exp\left(i\frac{2\pi}{N}p\right)=0$ reads $n_0=n_1=\hdots=n_{N-1}$ with only one free integer to choose. In that case, the constraints matrix $A$ is $\left(N\times 1\right)-$dimensional and it reads $A_{i0}=1$, such that the solution is given by $\vec{n}=\left(1,1\dots 1\right)^Tm$ for an arbitrary integer $m$.

If, however, $N$ is a product of two prime numbers $Q$ and $R$ (i.e., $N=QR$), then there are at least two sets of independent solutions. The first reads $n_p=n_{p+Q}=n_{p+2Q}=\hdots=n_{p+Q(R-1)}$ with $Q$ free integer parameters $n_0,\hdots,n_{Q-1}$, and the associated constraints matrix is the $\left(N\times Q\right)-$dimensional block matrix $A_{ij}=\delta_{i\%Q,j}$ with the Kronecker delta $\delta_{i,j}$ and the modulus operation $\%$.
The second solution is $n_p=n_{p+R}=\hdots =n_{p+R(Q-1)}$ with $R$ free integer parameters $n_0,\hdots,n_{R-1}$, and the associated constraints matrix is $\left(N\times R\right)-$dimensional such that $A_{ij}=\delta_{i\%R,j}$.

An example is schematically given in Fig. \ref{fig:cyclic symmetries} for the cases of $N=3$ as a prime number and for $N=15$ as a product of prime numbers. In both plots, $N$ equally distributed phases $\frac{2\pi}{N}p$ of the unit circle are shown, with the index $p$ next to each. Solutions to the equation $\sum_pn_p\exp\left(i\frac{2\pi}{N}p\right)=0$ are schematically indicated by the colored shapes attached to each phase. Each shape represents one solution, and all coefficients having the same color within the solution need to be equal. 

For the case of $N=3$ on the left there is only one solution since 3 is a prime number, and all the coefficients must be equal ($n_0=n_1=n_2$). For the case of $N=15$ on the right, two sets of solutions indicated by circles and triangles are depicted. For each set, all integer coefficients $n_j$ 
with the same color must be equal, \textit{e.g.} $n_0=n_5=n_{10}$ for the circles and \textit{e.g.} $n_0=n_3=n_6=n_9=n_{12}$ for the triangles.

\begin{figure}[h!]
    \centering
    \includegraphics[width=\linewidth]{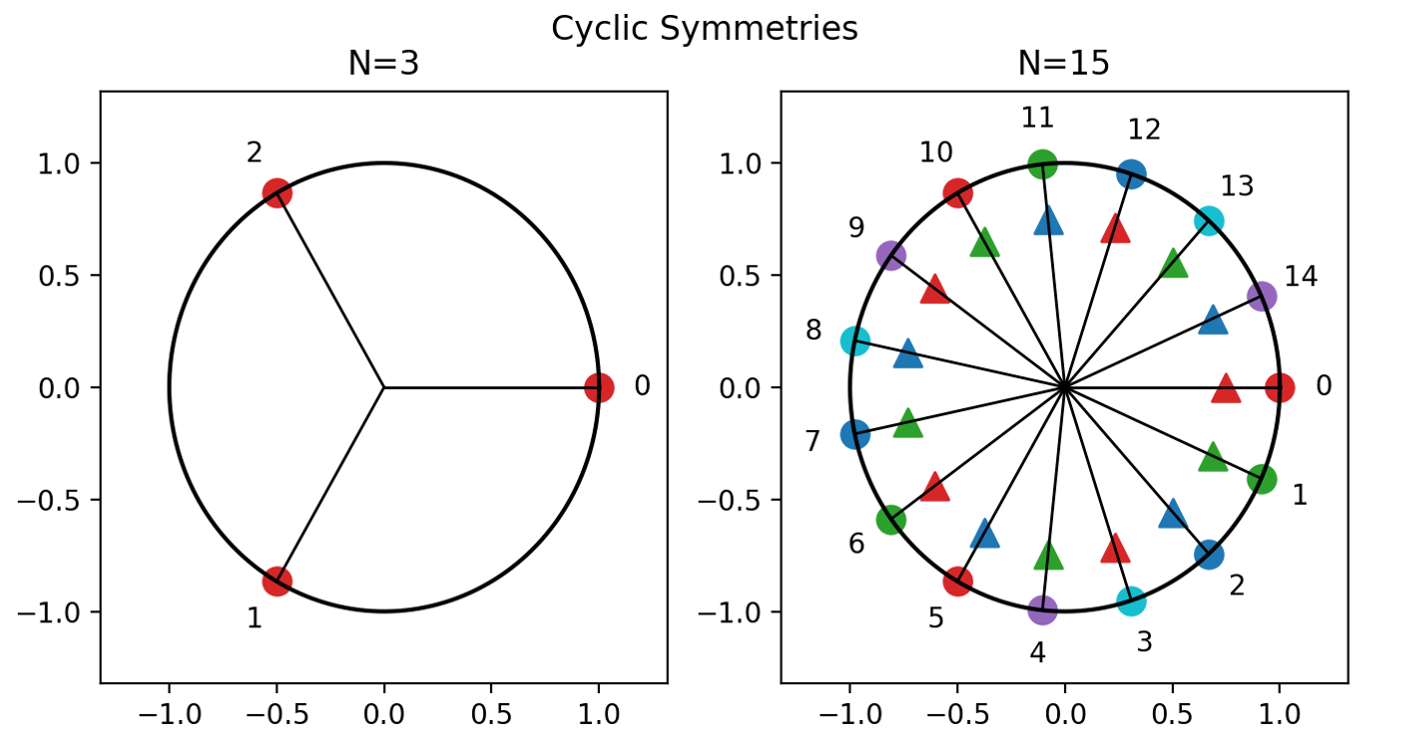}
    \caption{Schematic solutions 
    of Eq.~\eqref{eq:kernelcyclic} for the case of $N=3$ on the left and for $N=15$ on the right. On the left, a unit circle with 3 equally distributed phases across it is depicted. All three phases are marked with a circle, meaning there is only one solution. All three circles have the same color (red) meaning their corresponding integer coefficients must be equal. On the right, next to each phase there are two shapes - circles and triangles, meaning two sets of solutions are depicted. For each solution set separately, all integer coefficients corresponding to the phases marked with shapes with the same color must be equal.}
    \label{fig:cyclic symmetries}
\end{figure}

The generalization for any dimension $N$, stems from the fact that every positive integer can be decomposed uniquely
into a product of prime factors. Solutions to Eq.~\eqref{eq:kernel} and therefore the characterization of the subspaces within the kernels, will depend on the prime factors $p$ of the lattice dimension $N$. In the simplest case, each prime factor $p$ has an associated class of states, with an associated constraints matrix. The identified subspace of the kernel is spanned by all these classes of states, and the reduced problem Hamiltonian with the associated constraints matrix thus represents the problem Hamiltonian with restriction to the subspace. 

This idea generalizes directly to more complicated cases where prime factors of different functions of $N$ are taken into account. The multiplicity of the prime coefficients for any of the cases has no influence on the structure of the kernels.

The explicit characterization of the subspaces within the kernels of each of the operators is rather technical, and the reader that is interested in the derivation that relies on number theory is referred to Appendix \ref{appendix:Explicit Characterization of Subspaces within the Kernels}. Any reader that is happy to skip the derivation can find the classification in the following, where the cyclic case is described in Sec.~\ref{sec:symmetry,cyclic} and the nega-cyclic case in Sec.~\ref{sec:symmetry, nega-cyclic}.

\subsubsection{Cyclic Case}
\label{sec:symmetry,cyclic}

The explicit form of Eq.~\eqref{eq:kernel} for lattices with cyclic symmetry is given by Eq.~\eqref{eq:kernelcyclic}. The classes of states spanning the subspaces within the kernel depend on the dimension $N$ of the lattice and on its greatest common divisor with the index $q$ of the operator $\hat{S}^{(c)}_q$, that is $K:=\gcd(N,q)$. Five cases should be distinguished:
\begin{enumerate}
    \item[(1)] $q=0$;
    \item[(2)] $K=1$ and $N$ is odd or some power of $2$;
    \item [(3)] $K=1$ and $N$ is even, but not a power of $2$;
    \item[(4)] $K>1$ and $\frac{N}{K}$ is odd or some power of $2$;
    \item[(5)] $K>1$ and $\frac{N}{K}$ is even, but not a power of $2$.
\end{enumerate}

For each case, the associated constraints matrix (matrices) $A$ is (are) explicitly characterized. The subspace is spanned by all states \ls for which $\vec{n}=A\vec{m}$, and the reduced problem Hamiltonian is given by Eq.~\eqref{eq:Hr} for each constraints matrix. The constraints matrices read: 

\begin{enumerate}
    \item[(1):] The constraints matrix $A^1$ is $\left(N \times N-1\right)-$dimensional with elements
    \be
    \label{eq:symmetry,cyclic,1}
        A^{1}_{ij}= \delta_{i,j}-\delta_{i,N-1}
    \ee
     where $\delta_{i,j}$ is the Kronecker delta (see Theorem \ref{theorem:cyc,1} in Appendix \ref{Appendix: cyclic 1,2,4}).
    
    \item[(2):] Each prime factor $p$ of $N$ has an associated $\left(N \times \frac{N}{p}\right)-$dimensional constraints matrix $A^{2,p}$ with elements:
    \be
    \label{eq:symmetry,cyclic,2}
        A^{2,p}_{ij} = \delta_{i\%\frac{N}{p},j}\ ,
    \ee
     where $\%$ is the modulo operator (see Theorem \ref{theorem:cyc,K=1} in Appendix \ref{Appendix: cyclic 1,2,4}). The associated subspace is spanned by all states \ls that obey $\vec{n}=A^{2,p}\vec{m}$ for any arbitrary integer vector $\vec{m}$ and for all matrices $A^{2,p}$ according to all prime coefficients $p$ of the lattice dimension $N$.
    
    \item[(3):] Each prime factor $p>2$ of $\frac{N}{2K}$ has an associated constraints matrix $A^{3,p}$ with dimensions $\left(N \times \left(N-\frac{N}{2Kp}(p-1)\right)\right)$. It reads
        
        \be
        \label{eq:symmetry,cyclic,3}
        \begin{split}
            A^{3,p}_{ij} =& \delta_{ij} + \sum_{k=0}^{K-1}\delta_{i+(k+\frac{1}{2})\frac{N}{K}-N,j}\\
            &+(-1)^{[j]}\left(\delta_{i\%(\frac{N}{2Kp})+k\frac{N}{K},j}-\delta_{i\%(\frac{N}{2Kp})+(k+\frac{1}{2})\frac{N}{K},j}\right)\\
            & - \sum_{l=0}^{K-2}\delta_{i+(l+1)\frac{N}{K}-N,j}\ ,
        \end{split}
        \ee
        where $[j]$ stands for the floor of the quotient of $(j+\frac{N}{2K}-N)$ and $\frac{N}{2Kp}$ (see Theorem \ref{theorem:cyclic,case 3} in Appendix \ref{Appendix: cyc, 3,5}).

    The associated subspace is spanned by all states \ls for which $\vec{n}=A\vec{m}$ for all constraints matrices $A^{2,p}$ and $A^{3,q}$ according to all prime factors $p$ of $N$ and prime factors $q$ of $\frac{N}{2K}$ , as defined in Eq.~\eqref{eq:symmetry,cyclic,2} and Eq.~\eqref{eq:symmetry,cyclic,3} respectively. Each constraints matrix is associated
    with a reduced problem Hamiltonian according to Eq.~\eqref{eq:Hr}.         
        
    \item[(4):] Each prime factor $p$ of $\frac{N}{K}$ has an associated constraints matrix $A^{4,p}$ with dimensions 
    $\left(N \times \left(N-\frac{N}{Kp}(p-1)\right)\right)$, and it reads:
    \be
    \label{eq:symmetry,cyclic,4}
        A^{4,p}_{ij} = \delta_{ij} + \delta_{i\%\frac{N}{Kp},j} + \sum_{k=1}^{K-1}\delta_{(k\frac{N}{K}+i)\%\frac{N}{Kp},j}-\delta_{k\frac{N}{K}+i-N,j}
    \ee
    (see Theorem \ref{theorem:cyclic, case 4} in Appendix \ref{Appendix: cyclic 1,2,4}).
    
    \item[(5):] The associated subspace is spanned by all states \ls for which $\vec{n}=A\vec{m}$ for all constraints matrices $A^{3,p}$ and $A^{4,q}$ according to all prime factors $p$ of $\frac{N}{2K}$ and prime factors $q$ of $\frac{N}{K}$, as defined in Eq.~\eqref{eq:symmetry,cyclic,3} and Eq.~\eqref{eq:symmetry,cyclic,4} respectively
    (see Theorem \ref{theorem:cyclic, case 4} in Appendix \ref{Appendix: cyclic 1,2,4} and Theorem \ref{theorem:cyclic,case 3} in Appendix \ref{Appendix: cyc, 3,5}).
    Each constraints matrix is associated 
    with a reduced problem Hamiltonian according to Eq.~\eqref{eq:Hr}.
    
\end{enumerate}

\subsubsection{Nega-Cyclic Case}
\label{sec:symmetry, nega-cyclic}

In the case of nega-cyclic symmetry the explicit form of Eq.~\eqref{eq:kernel} reads
\be
\label{eq:kernel,nega-cyclic}
    \sum_{p=0}^{N-1}n_p\exp\left(-i\frac{\pi}{N}p\left(2q+1\right)\right)=0\  ,
\ee
so the kernel of the operators $\hat{S}_q^{(n)}$ is spanned by all states $\ket{n_0,n_1,\hdots,n_{N-1}}$ with integer coefficients $n_p$ that solve Eq.~\eqref{eq:kernel,nega-cyclic}.

Solutions for the general case are based on the solution for the simplest case for the operator $\hat{S}_0^{(n)}$ with the operator index $q=0$. Eq.~\eqref{eq:kernel,nega-cyclic} reduces then to
\begin{subequations}
\be
\label{eq:kernel,nega-cyclic,q=0}
	\sum_{p=0}^{N-1} n_p \exp\left(i\omega_p\right)=0\ ,
\ee
with the phases $\omega_p =-\frac{\pi}{N}p$ evenly distributed in the bottom half of the unit circle. 

If the dimension $N$ is a prime number greater than two, the solution is identified in terms of the solution to the equation $\sum_p n_p \exp\left( -i\frac{2\pi}{N}p\right)=0$ with evenly distributed phases among the whole unit circle as discussed above. This is done by adding $\pi$ to all phases $\omega_p$ with odd index $p$ in Eq.~\eqref{eq:kernel,nega-cyclic,q=0} and negating the corresponding coefficients. That is $n_p\exp\left(i\omega_p\right) =  -n_p\exp\left(i\left(\omega_p+\pi\right)\right)$ for all the integer coefficients with an odd index. Eq.  ~\eqref{eq:kernel,nega-cyclic,q=0} can then be re-written
as   
\be
    \sum_{p=0}^{N-1} y_p \exp\left(-i\frac{2\pi}{N}p\right)\ ,
\ee
where
\be
    y_p = \begin{cases}
            \quad\   n_{2p} & \text{if } p \leq \frac{N-1}{2} \\
            -n_{2p-N} & \text{if } p \geq \frac{N+1}{2}
          \end{cases}\ .
\ee
\end{subequations}
Based on the cyclic case with a prime $N$, the solution to the above equation in the integer coefficients $n_p$ has only one free parameter (\textit{e.g.} $n_0$) and it reads $n_0 = -n_1 = n_2 = -n_3 =\hdots =-n_{N-2} =n_{N-1}$.

The solution can be depicted graphically and an example for the case of $N=3$ is presented in Fig. \ref{fig:nega-cyclic symmetries}, with $3$ roots of unity evenly distributed among the bottom half of the unit circle. By negating the only odd-indexed coefficient $n_1$, its corresponding phase factor is reflected across the unit circle as indicated by the dashed black line. After the negation, the phases are evenly distributed among the unit circle thus the solution requires all integer coefficient to have the same value. The negation is depicted with the yellow cross within the circle associated to the integer coefficient $n_1$, and all circles have the same color as they are equal in absolute value.

\begin{figure}[h!]
    \centering
    \includegraphics[width=0.85\linewidth]{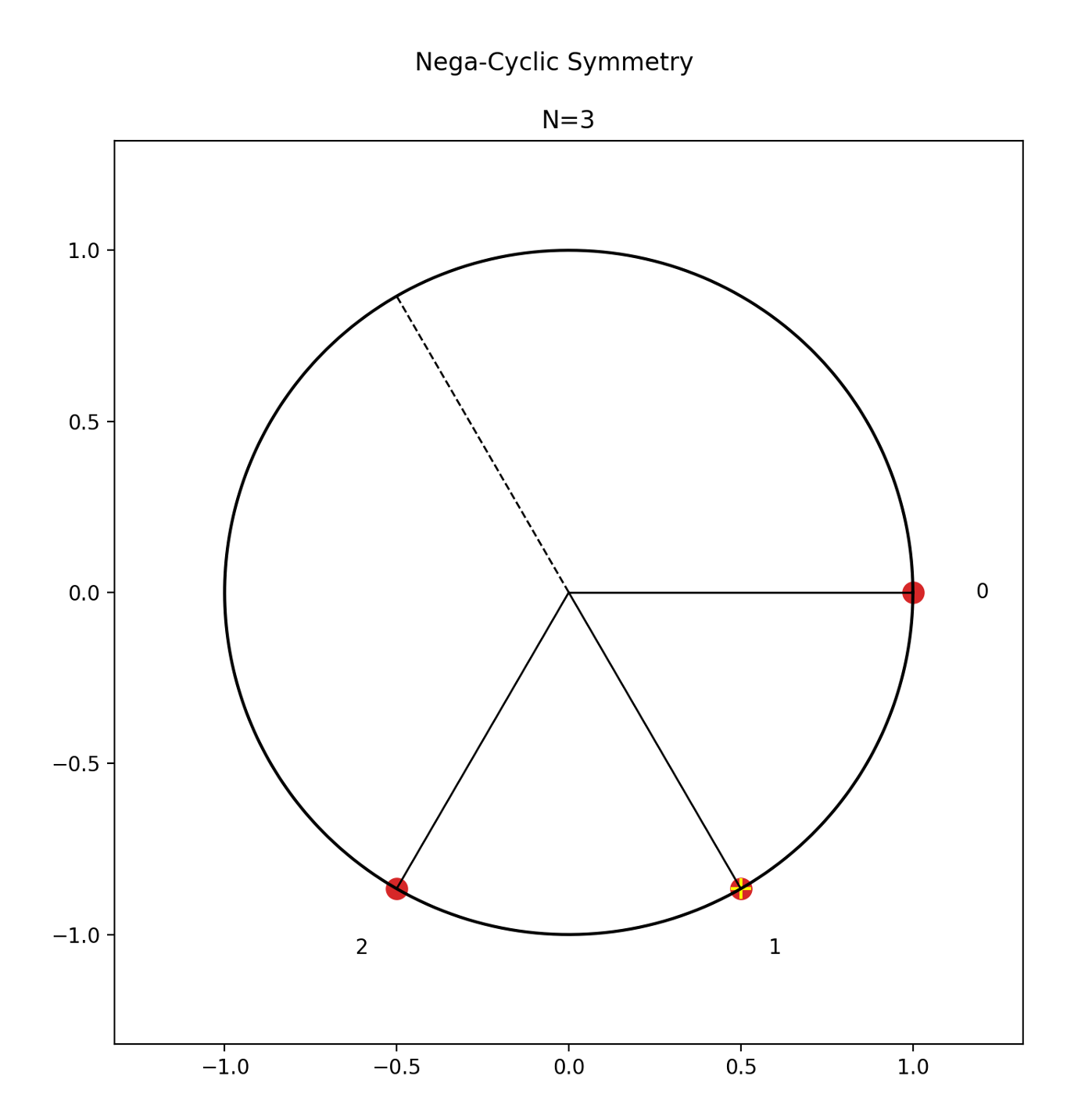}
    \caption{A schematic solution for Eq.~\eqref{eq:kernel,nega-cyclic,q=0} with $N=3$. Three phases are equally distributed in the bottom half of the unit circle. Next to each phase, a red circle is drawn, indicating that the integer coefficient associated with the phases must be equal in their absolute value. The yellow cross within the circle associated with the phase indexed $1$ indicates the negation of corresponding coefficient, i.e., $n_1 \rightarrow -n_1$, and the addition of $\pi$ radians to the corresponding phase. The flip of the phase, which is indicated by the dashed black line, results in the same distribution of phases across the entire unit circle as depicted in Fig. \ref{fig:cyclic symmetries} for the cyclic case.}
    \label{fig:nega-cyclic symmetries}
\end{figure}

As in the cyclic case, the above characterization can be extended to solutions of Eq.~\eqref{eq:kernel,nega-cyclic,q=0} for general lattice dimension $N$. Each prime factor of $N$ which is greater than two, 
has an associated class of states, and all classes corresponding to all prime factors $p$ span the associated subspace within the kernel of $\hat{S}_0^{(n)}$. 

The characterization of subspaces within the kernel of a general nega-cyclic operator  $\hat{S}_q^{(n)}$ 
depends on the dimension of the lattice $N$ and on the greatest common divisor of $2N$ and $2q+1$ denoted by $L$, i.e., $L:=\gcd(2N,2q+1)$. Four cases should be distinguished:

\begin{enumerate}
    \item[(I)] $N=2^l$ for some positive integer $l$;
    \item[(II)] $L=1$ and $N$ is not some power of $2$; 
    \item[(III)] $L>1$ and $N=2^lL$ for some positive integer $l$; 
    \item[(IV)] $L>1$ and $N\ne 2^lL$ for any positive integer $l$.
\end{enumerate}

As in the cyclic case (Sec.~\ref{sec:symmetry,cyclic}), explicit constraints matrices are characterized for each case. The associated subspace is spanned by all states \ls that obey $\vec{n}=A\vec{m}$ for all the associated constraints matrices, and for an arbitrary integers vector $\vec{m}$. The reduced problem Hamiltonian associated with each constraints matrix is given by Eq.~\eqref{eq:Hr}. The constraints matrices are characterized as follows: 

\begin{enumerate}
    \item[(I):] The constraints matrix is a zero matrix such that the subspace (and the kernel) equals the zero state $\ket{0,0,\hdots,0}$ (see Theorem \ref{theorem:nega,case 1} in Appendix \ref{Appendix: nega-cyclic 1,2,3}).
    \item[(II):] Each prime factor $p>2$ of $N$ has an associated constraints matrix  $A^{II,p}$ with dimensions $\left(N \times \frac{N}{p}\right)$ and with elements
    \be
    \label{eq:symmetry,nega,2}
    A^{II,p}_{ij} = (-1)^{\lfloor i\frac{p}{N} \rfloor}\delta_{i\%\frac{N}{p},j}\ ,
    \ee
    where $\lfloor \cdot \rfloor$ is the floor operator (see Theorem \ref{theorem:nega-cyclic,case 2} in Appendix \ref{Appendix: nega-cyclic 1,2,3}).
    
    \item[(III):] The constraints matrix $A^{III}$ has dimensions $\left( N \times \left(N-\frac{N}{L}\right)\right)$ and its elements are given by
    \be 
    \label{eq:symmetry,nega,3}
    A^{III}_{ij} = \delta_{i,j} + \sum_{k=1}^{\frac{L-1}{2}} \delta_{i-(k-\frac{1}{2})\frac{N}{L},j} - \delta_{i-k\frac{N}{L},j}
    \ee 
    (see Theorem \ref{theorem:nega,case 3} in Appendix \ref{Appendix: nega-cyclic 1,2,3}).  

    \item[(IV):] Two types of constraints matrices are defined. First, for each prime factor $p$ of $\frac{2N}{L}$, a constraints matrix $A^{IV,p}$ with dimensions $\left(N \times \left(\frac{2N}{L} - \frac{2N}{Lp}\right)\right)$ is defined according to
    \begin{subequations}        
	\be
    \label{eq:symmetry,nega,4a}
    A^{IV,p}_{ij} = \delta_{i\%\frac{2N}{Lp},j} + \sum_{k=1}\delta_{i\%(\frac{2N}{Lp})+k\frac{2N}{L},j} - \delta_{i+k\frac{2N}{L},j}\ .
    \ee
 
    Secondly, for each prime factor $q>2$ of $\frac{N}{L}$, a constraints matrix $A^{V,q}$ with dimensions $\left(N \times \frac{N}{L} - \frac{N}{Lq}\right)$ is given by
	\be
    \label{eq:symmetry,nega,4b}
    \begin{split}
        A^{V,q}_{ij} = &\delta_{i,j} -\sum_{k=1}\delta_{i+k\tilde{N},j} + \sum_{l=0} \delta_{i+(l+\frac{1}{2})\tilde{N},j} +\\
				&(-1)^{\lfloor i\frac{Lq}{N} \rfloor}\left( \delta_{i\%(\frac{N}{Lq})+l\tilde{N},j} - \delta_{i\%(\frac{N}{Lq})+(l+\frac{1}{2})\tilde{N},j} \right)\ ,
	\end{split}
    \ee
    \end{subequations}
    where $\tilde{N}$ is a short hand notation for $\frac{2N}{L}$.
    
    The associated subspace within the kernel is spanned by all states \ls that obey the relation $\vec{n} = A^{IV,p}\vec{m}$ or $\vec{n} = A^{V,q}\vec{m}$ for an arbitrary vector of integers $\vec{m}$, for all prime factors $p$ of $\frac{2N}{L}$ and all prime factors $q$ of $\frac{N}{L}$ (see Theorem \ref{theorem:nega,4} in Appendix \ref{Appendix:nega,4}). Each constraints matrix has its associated reduced problem Hamiltonian according to Eq.~\eqref{eq:Hr}.
 
\end{enumerate}

\section{Benchmarking with the Variational Quantum Eigensolver}
\label{sec: benchmarking}

The reduced problem Hamiltonian $\hat{H}=\sum_{ij}F_{ij}\hat{N}_i\hat{N}_j$ defined in Eq.~\eqref{eq:Hr} represents the problem Hamiltonian $\hat{H}=\sum_{ij}G_{ij}\hat{N}_i\hat{N}_j$ defined in Eq.~\eqref{eq:H}, with the restriction to the the kernel of some operator $\hat{S}_q$ (or to a subspace within that kernel), according to the constraints matrix $A$ for which $F=A^{\dagger}GA$. The reduction in the number of operators $\hat{N}_i$ in the reduced problem Hamiltonian results in a substantial reduction in the resources needed for searching for low excited states of the Hamiltonian in any practical implementation. This method is now benchmarked with the variational quantum eigensolver algorithm \cite{VQE}.

\subsection{The Variational Quantum Eigensolver for the Problem Hamiltonian}

The variational quantum eigensolver (VQE) is a hybrid quantum-classical algorithm. Its input is some Hamiltonian $\hat{H}$ and its output is a quantum state $\ket{\Psi}$ which is highly probable to be a low excited state of the Hamiltonian $\hat{H}$ \cite{VQAReview}.

The quantum mechanical component of the algorithm is the preparation of a parameterized quantum state $\ket{\Psi}$ and a measurement scheme which allows for the evaluation of the expectation value $\bra{\Psi}\hat{H}\ket{\Psi}$ of the problem Hamiltonian $\hat{H}$ with respect to the prepared quantum state $\ket{\Psi}$. The classical part consists of a variational analysis over the parameters of the quantum state $\ket{\Psi}$ with the goal of minimizing the expectation value $\bra{\Psi}\hat{H}\ket{\Psi}$.

In any practical implementation, the states $\ket{\Psi}$ are realized with registers of qubits,
so that the problem Hamiltonians given in Eqs.~\eqref{eq:H} and~\eqref{eq:Hr} need to be defined in terms of qubit operators.
With the construction 
\be
\hat N_i=-2^{K-1}+\sum_{j=0}^{K-1}\hat Z_{ij}2^k
\ee
in terms of the Pauli-$Z$ operator for $K$ different qubits \cite{Joseph2TwoQuantumIsingModels}, the operator $\hat N_i$ has an integer spectrum ranging from $-2^{K-1}$ to $2^{K-1}-1$.
With $d$ registers of $K$ qubits each, one can thus realize a problem Hamiltonian defined in Eq.~\eqref{eq:H} for a $d$-dimensional lattice, or a reduced problem Hamiltonian as defined in Eq.~\eqref{eq:Hr} with a constraints matrix of size $\left(N\times d\right)$ for a lattice dimension $N>d$.

An Hamiltonian of the form
\begin{subequations}

\be
 \hat{H} = \sum_{ij}B_{ij}\hat{N}_i\hat{N}_j
\ee
can therefore be explicitly realized in terms of qubit operators with

\be
\begin{split}     
    \hat{H}=\sum_{ij}B_{ij} (&2^K + \sum_{pq}2^{p+q}\hat Z_{ip}\hat Z_{jq} \\ 
                                &- 2^{K-1}\sum_p(\hat Z_{ip}+\hat Z_{jp})2^p)\ .
\end{split}
\ee
\end{subequations}

In order to avoid the convergence of the VQE to the zero state $\ket{\vec{0}}$ which is the trivial ground state of Hamiltonians of the form in Eq.~\eqref{eq:H} and Eq.~\eqref{eq:Hr}, an additional penalty term, $G_{00}\ket{\vec{0}}\bra{\vec{0}}$, is added to the Hamiltonian. Thus the eigenvalue corresponding to the zero state $\ket{\vec{0}}$ is the $G_{00}$ component of the Gram matrix instead of the trivial ground state eigenvalue $0$. 

\subsection{Numerical Methods}

The applications of the VQE for the search for low-excited states of problem Hamiltonians (Eq. ~\eqref{eq:H}) and the corresponding reduced problem Hamiltonians (Eq. ~\eqref{eq:Hr}) are compared for two hundred 6-dimensional nega-cyclic lattices.
The implementations for the problem Hamiltonians contain 6 quantum registers with 3 qubits each. For the reduced problem Hamiltonians, $d$ registers of 3 qubits are used, such that $d<N$ and is determined according to the associated constraints matrix $A$ with dimensions $\left( N \times d \right)$. The analysis is made such that the spectrum of the number operators in both implementations is equal.

The explicit quantum circuits for both implementations are chosen to be standard three layer circuits followed by measurements. A scheme of one layer for a quantum circuit with 6 qubits is depicted in Fig. \ref{fig:symmetry circuit}. Each layer consists of parameterized $RY$ and $RZ$ gates applied to all qubits, followed by a circular ladder of $CNOT$ gates \cite{VQAReview}. The algorithms are emulated with the noiseless quantum simulator \textit{default.qubit} of the \textit{Pennylane} package \cite{Pennylane}.

\subsection{Examples of Reduced Qubits Count and Circuit Depth}
\label{subsec: examples}

In order to consolidate the understanding of the proposed framework, consider two specific $6-$dimensional nega-cyclic lattices with bases $B$ and $C$, that are generated from the two normalized lattice vectors
\begin{subequations}
    \be
    \vec{b}= \begin{bmatrix}
        0.010 \\ -0.45 \\ -0.50 \\ -0.67 \\ -0.36 \\ -0.18
    \end{bmatrix} 
    \ ;\quad 
    \vec{c} = \begin{bmatrix}
        -0.12  \\ -0.34\\ 0.087 \\ 0.51 \\ 0.56 \\ 0.53
    \end{bmatrix}
    \ee
\end{subequations}
respectively, represented with two significant digits. That is the basis vectors of $B$ ($C$) are given by $\{\vec{b},\Gamma\vec{b},\hdots,\Gamma^5\vec{b}\}$ ($\{\vec{c},\Gamma\vec{c},\hdots,\Gamma^5\vec{c}$\}).
The largest eigenvalue of the Gram matrix $G$ for both lattices is the $0-$th eigenvalue $g_0$. Therefore the corresponding operator $\hat{S}_0^{(n)}$ is the principal operator, and the search for low-excited states is limited to a subspace within its kernel. 

The subspace within the kernel of the operator $\hat{S}^{(n)}_0$ is characterized according to case (II) in Sec.~\ref{sec:symmetry, nega-cyclic} since the greatest common divisor of $2N$ and $2q+1$ equals unity (and six is not a power of two).
The corresponding constraints matrices are given by Eq.~\eqref{eq:symmetry,nega,2}, and because $3$ is the only prime factor greater than two of $N=6$, only one class of states spans the kernel (there is only one constraints matrix). This class of states is explicitly given by $\ket{m_0,m_1,-m_0,-m_1,m_0,m_1}$, with the free integer coefficients $m_0,m_1$.

The explicit constraints matrix $A$ which relates the free integer vector $\vec{m}=(m_0,m_1)^T$ to the $6-$dimensional integer coefficient vector $\vec{n}$ with $\vec{n}=A\vec{m}$ is explicitly given by
\be
A = 
\begin{bmatrix}
1 & 0 \\
0 & 1 \\
-1 & 0 \\
0 & -1 \\
1& 0 \\
0 & 1\\ 
\end{bmatrix}\ .
\ee
This results in the $2\times 2$ matrix $F=A^{\dagger} G A$ according to Eq.~\eqref{eq:F}. Therefore the corresponding eigenvalue of a lattice state in the kernel of $\hat{S}^{(n)}_0$ (equivalently the length of the corresponding lattice vector) is given by
\be
    \sum_{ij=0}^1 F_{ij}m_im_j + G_{00}\delta_{0,m_i}\delta_{0,m_j}\ ,
\ee
including the penalty term for the zero state.

One layer of the quantum circuit used in the VQE for the reduced problem Hamiltonian is  depicted in Fig. \ref{fig:symmetry circuit}. The full circuit contains three layers followed by a measurement of all qubits. 
 Because there are only two degrees of freedom (the constraints matrix has only two columns), the circuit contains only two quantum registers with 3 qubits each. The top three qubits  
form one register, as well as the bottom three, and upon measurement of their values, the integer coefficients $m_0$ and $m_1$ are determined respectively.

\begin{figure}[h!]
    \centering
    \includegraphics[width=\linewidth]{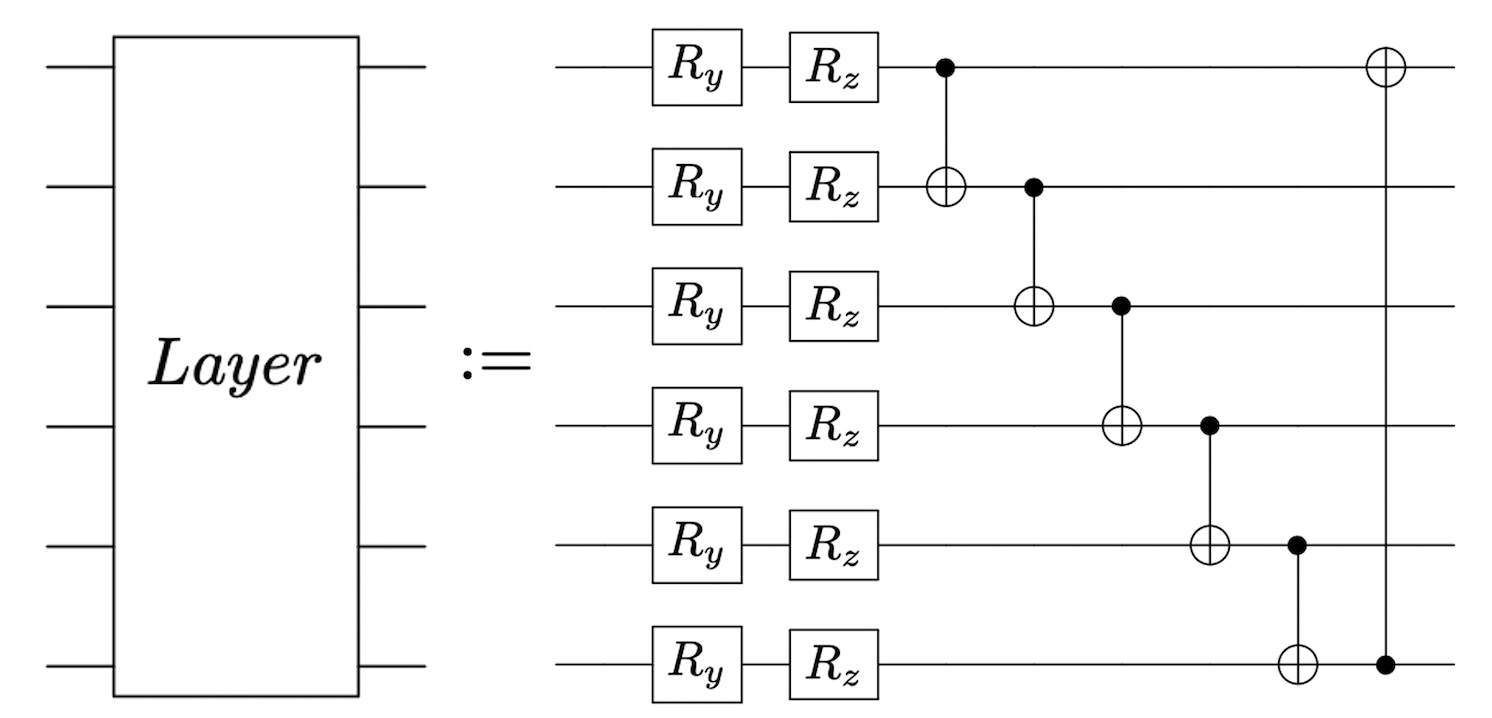}
    \caption{One layer of the variational quantum circuit for circuits with 6 qubits. The layer contains applications of the $RY$ and $RZ$ single qubit gates on all qubits, followed by circular ladder of entangling $CNOT$ gates.}
    \label{fig:symmetry circuit}
\end{figure}

The output of the VQE for the reduced problem Hamiltonian
for the lattice given by the basis $B$, is the quantum state $\ket{1,0}$ and the corresponding eigenvalue (lattice vector length) is $0.21$. The output of the VQE for the problem Hamiltonian however, which uses a quantum circuit with the same structure as in Fig ~\ref{fig:symmetry circuit} but with $18$ qubits, results in the lattice state $\ket{2,-1,-1,1,0,1}$ with the corresponding eigenvalue $0.86$. 

The ratio between the vector lengths resultant from the VQE for the reduced problem Hamiltonian and the problem Hamiltonian is denoted 
by $\lambda$. A value of $\lambda<1$ represents a lattice for which the VQE for the reduced problem Hamiltonian results in a lower excited state than the VQE for the problem Hamiltonian. This is the case for the lattice given by the basis $B$, with a value of $\lambda=0.24$.  

For the lattice given by the basis $C$, the VQE for the reduced problem Hamiltonian results in the state $\ket{0,1}$ with a corresponding vector length of $0.70$, and the VQE for the problem Hamiltonian results in the lattice state $\ket{1,-1,1,0,-1,1}$ with the corresponding vector length of $0.91$.
The value of $\lambda$ for this lattice is thus $0.77$, which is smaller than unity as well, indicating that the VQE for the reduced problem Hamiltonian results in a shorter lattice vector.

For both lattices the VQE for the reduced problem Hamiltonian results in lower excited-states than the one for the problem Hamiltonian, even though the the search space of the former is restricted to the kernel of $\hat{S}_0^{(n)}$, whilst for the latter the search space consists of all lattice states. This is possibly an outcome of a more complex optimization landscape
in the case of the full problem Hamiltonian compared to the case of the reduced problem Hamiltonian. The more complicated optimization landscape might increase the risk of the classical optimization to get stuck in a local optima rather than converging to the global minimum.

The use of only $6$ qubits for the reduced problem Hamiltonian implementation instead of $18$ for the problem Hamiltonian with a shorter circuit depth (24 instead of 60), while resulting lower excited-states, encapsulates the key benefits of our work. That is, the framework eases the search of the algorithm by restricting it to a small subspace that is very likely to contain a low excited state which corresponds to a short lattice vector, and as a consequence, 
the quantum algorithm benefits from a significant reduction in qubit count and circuit depth.

\subsection{Statistics of the Results}

A histogram which summarizes the ratio between the length of the output states of the VQE for the reduced problem Hamiltonians and for the problem Hamiltonians, i.e., the values of $\lambda$,
for all lattices examined is presented in Fig. \ref{fig:ratio histogram}. All bars to the left of the dashed black line indicate cases where the ratio is smaller than unity. This is the case for 130 out of the 200 lattices, with the median value of $\lambda$ is $0.63$.
That is, for the majority of lattices, the VQE for the reduced problem Hamiltonian results in shorter vectors than the VQE for the full problem Hamiltonian.
As argued earlier (Sec. \ref{subsec: examples}), the main reason for this is likely a better behaved optimization landscape which allows the former case to converge to a global minimum more easily.

\begin{figure}[htb]
    \centering
    \includegraphics[width=\linewidth]{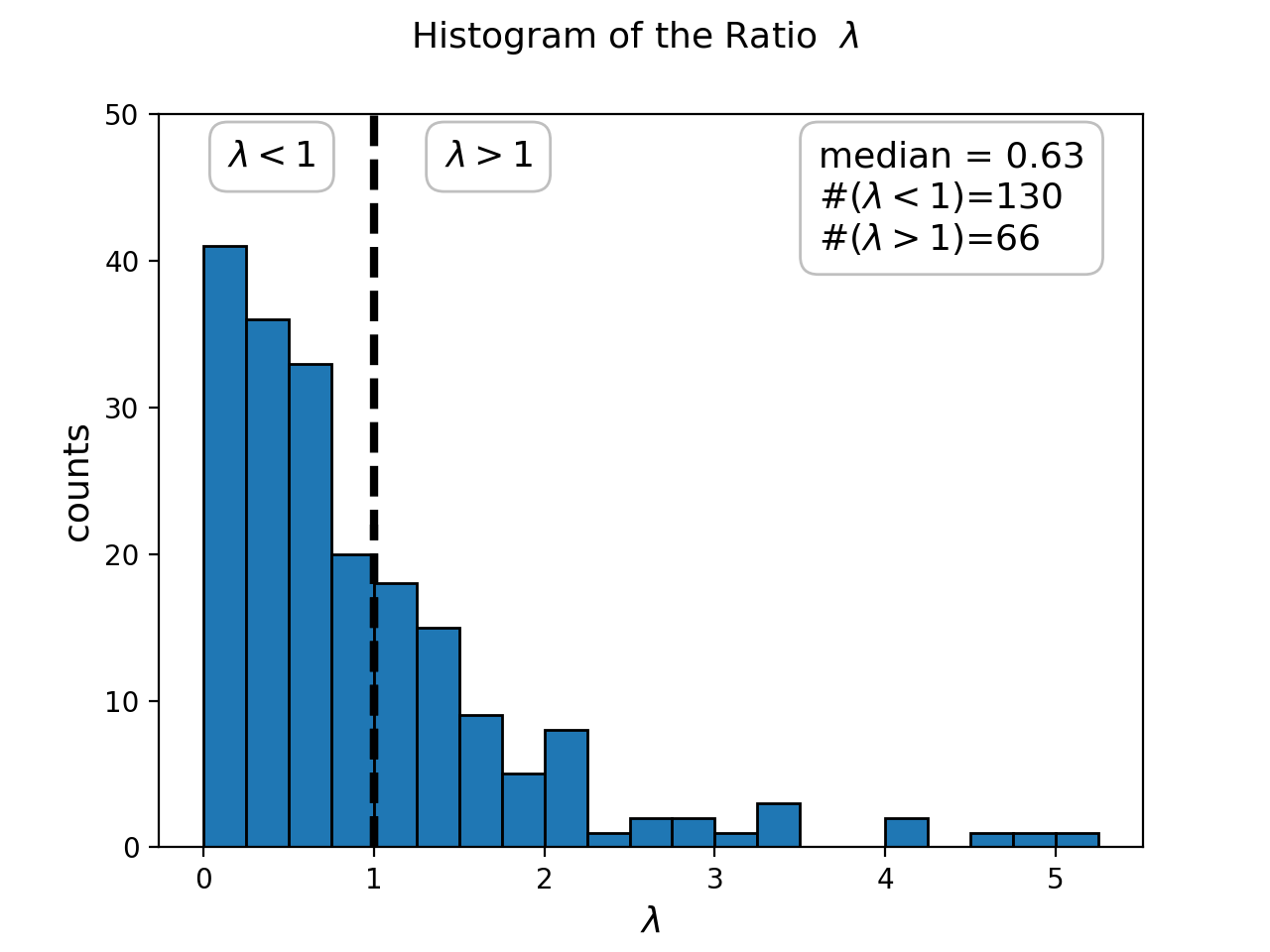}
    \caption{Histogram of the ratio between the energy values of the outputs of the VQE algorithms for the reduced problem Hamiltonians and the full problem Hamiltonians. Two hundred lattices are examined. Bars left to the dashed black line indicate lattices for which the VQE for the reduced problem Hamiltonian results in a lower excited-state than the VQE for the full problem Hamiltonian. This is the case for 130 out of the 200 lattices.}
    \label{fig:ratio histogram}
\end{figure}

The number of qubits used for the VQE implementation of the problem Hamiltonians is $18$, for all lattices, whilst the number of qubits for the implementations of the reduced problem Hamiltonians depends on the index $q$ of the operator $\hat{S}_q^{(n)}$.
For example, in the two cases examined in Sec. ~\ref{subsec: examples}, the search is restricted to a subspace within the kernel of the operator $\hat{S}_0^{(n)}$, and 6 qubits are used for the implementation of the VQE for the reduced problem Hamiltonians. The average number of qubits used for the implementations of the reduced problem Hamiltonians is $7.74$. It requires on average $2.33$ times fewer qubits than the VQE for the problem Hamiltonian. 
Overall, the application of our framework to the VQE for the search of short lattice vectors results in shorter vectors compared to the original implementation with fewer qubit counts.

\section{Outlook}

While symmetries are routinely exploited in physics to simplify computational problems, there is the common belief that symmetries of lattices are not beneficial for the shortest vector problem and its variants.
This might be true even with access to ideal quantum hardware that allows for unlimited scaling of qubit and gate counts.
As long as an increase in quantum resources is an actual bottleneck, however, the present work shows that the exploitation of symmetries can provide a gain in efficiency that ultimately makes a problem solvable.

The ability to break down an original SVP problem into several smaller SVP problems is independent of whether solutions are sought by classical or quantum mechanical means. The techniques derived in this paper can thus find applications in a broad range of classical lattice-reduction algorithms, such as the Lenstra-Lenstra-Lov\'asz algorithm~\cite{LLL} and its variants.
Because the proposed framework can be incorporated into both quantum and classical algorithms, it appears particularly promising for hybrid schemes that integrate quantum and classical components.

The availability of fault-tolerant application-scale quantum (FASQ) devices \cite{EisertPreskill2025} will open a new frontier in scientific computation. Efficient encoding of lattice-based models into Hamiltonians is therefore expected to play a central role in quantum numerical methods — such as in materials science and Hubbard-type models in condensed matter physics \cite{Alexeev2024, HaghshenasEtAl2025, Fauseweh2024QuantumManyBodySimulations}. This work may therefore serve as a foundation for future developments in the efficient simulation of lattice-based physical models with inherent symmetries.

\appendix

\section{Extension of the Characterization of the Kernels}
\label{appendix:extension}

In Sec.~\ref{subsec:kernels}, the exact characterization of the kernels is given, for the case of cyclic lattices where the greatest common divisor of the the lattice dimension $N$ and the index $q$, denoted by $K$ (i.e., $K:=\text{gcd}(N,q)$) is unity. In the following, the extensions of this characterization for further cyclic and nega-cyclic cases are described. First, the extension to cyclic cases with $K>1$ is given, followed by the extension to the nega-cyclic cases with the index $q>0$.

For lattices with cyclic symmetry, the kernel of the corresponding operator $\hat{S}^{(c)}_q$ is spanned by all lattice states \ls with integer coefficients $n_p$ that satisfy
\be
\label{eq:appendix cyc}
    \sum_{p=0}^{N-1}n_p\exp\left(-i\frac{2\pi}{N}qp\right)=0\ ,
\ee
a special case of Eq.~\eqref{eq:kernel}. 

For cases with $K>1$, an effective dimension is defined as $\tilde{N}=\frac{N}{K}$ and an effective coefficient as $\tilde{q}=\frac{q}{K}$ such that $\text{gcd}(\tilde{N},\tilde{q})=1$ by definition. Eq.~\eqref{eq:appendix cyc} reduces to
\begin{subequations}
    \be
\label{eq:appendix,cyc,K>1,kernel}
\sum_{l=0}^{\tilde{N}-1}y_l\exp\left(-i\frac{2\pi}{\tilde{N}}\tilde{q}l\right)=0\ , 
\ee
where
\be
\label{eq:y_l definition}
y_l := \sum_{k=0}^{K-1}n_{l+k\tilde{N}}\ .
\ee
\end{subequations}
The solution in terms of the coefficients $y_l$ is given by the method described in Sec.~\ref{subsec:kernels}, and a direct substitution of the integers $n_p$ gives the solution that can be expressed in terms of a constraints matrix.

For lattices with nega-cyclic symmetry, since the rank is $\varphi(2N)$, we rewrite the exponential terms in \eqref{eq:appendix negac-cyclic} as 
\be
\left(
  \begin{array}{c}
    1 \\
    e^{-i\frac{2\pi}{2N}(2q+1)} \\
    \vdots \\
    e^{-i\frac{2\pi}{2N}(N-1)(2q+1)} \\
  \end{array}
\right) = D 
\left(
  \begin{array}{c}
    1 \\
    e^{-i\frac{2\pi}{2N}} \\
    \vdots \\
    e^{-i\frac{2\pi}{2N}(\varphi(2N)-1)} \\
  \end{array}
\right)\ ,
\ee
with the matrix $D$ being an integer matrix of size $N \times \varphi(2N)$. Note that $D^T$ is the counterpart of $\left( I \ |\ C^T\right)$ in \eqref{eq:kernel_omega_2}. Then we can apply the same procedure to find the Hermite normal form, hence the kernel. Now, the kernel has dimension $\geq N-\varphi(2N)$.

\section{Explicit Characterization of Subspaces within the Kernels}
\label{appendix:Explicit Characterization of Subspaces within the Kernels}

In the following, the explicit characterizations of subspaces within the kernels of the operators $\hat{S}_q$, as described in Sec.~\ref{sec:Explicit Characterization of Subspaces Within The Kernels} are proven. Due to dependencies between the proofs, first the proofs for the cyclic cases (1),(2) and (4) are given, then for the nega-cyclic cases (I), (II) and (III), and then for the rest of cyclic and nega-cyclic cases.

\subsection {Cyclic Cases (1),(2) and (4)}
\label{Appendix: cyclic 1,2,4}

The following theorems refer to lattices with cyclic symmetry. The lattice states \ls that span the kernel are all states with integer vectors $\vec{n}=A\vec{m}$ with a corresponding constraints matrix $A$. The constraints matrix represents the solutions to Eq.~\eqref{eq:appendix cyc}. The greatest common divisor of $N$ and $q$ is denoted by $K$.

\begin{theorem}
\label{theorem:cyc,1}
For $q=0$ the constraints matrix associated to the kernel of $\hat{S}^{(c)}_0$ is $\left(N \times N-1\right)-$dimensional with elements $A^{1}_{ij}= \delta_{i,j}-\delta_{i,N-1}$.
\end{theorem}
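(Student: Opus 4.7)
The plan is to start by substituting $q=0$ into the explicit form of $\hat{S}^{(c)}_q = \sum_p U^{(c)}_{qp}\hat{N}_p$. Using Eq.~\eqref{eq:U}, one finds $U^{(c)}_{0p}=1/\sqrt{N}$ for every $p$, so the kernel equation~\eqref{eq:kernelcyclic} simplifies, after dropping the irrelevant global factor $1/\sqrt{N}$, to the single linear constraint
\begin{equation}
\sum_{p=0}^{N-1} n_p = 0\ .
\end{equation}
Thus characterising the kernel of $\hat{S}^{(c)}_0$ reduces to describing the integer solutions of this one equation, and the task is to verify that the prescribed $A^{1}$ supplies exactly such a parametrisation.

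Next I would unpack the claimed formula $A^{1}_{ij}=\delta_{i,j}-\delta_{i,N-1}$ for $0\le i\le N-1$ and $0\le j\le N-2$. The first $N-1$ rows form the $(N-1)\times(N-1)$ identity, while the last row is the all-$(-1)$ row vector. Hence, for any integer vector $\vec{m}=(m_0,\dots,m_{N-2})^T$, the components of $\vec{n}=A^{1}\vec{m}$ are $n_i=m_i$ for $i<N-1$ and $n_{N-1}=-\sum_{j=0}^{N-2}m_j$. Summing gives $\sum_p n_p=0$, so every such $\vec{n}$ lies in the kernel; this is the easy direction.

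For the converse direction I would show that any integer vector $\vec{n}$ satisfying $\sum_p n_p=0$ arises in this way. Setting $m_i:=n_i$ for $i=0,\dots,N-2$ yields an integer vector $\vec{m}$, and the kernel equation forces $n_{N-1}=-\sum_{i=0}^{N-2}n_i=-\sum_{i=0}^{N-2}m_i$, which matches the prescription $n_{N-1}=(A^{1}\vec{m})_{N-1}$ exactly. This bijection between integer solutions and integer vectors $\vec{m}\in\mathbb{Z}^{N-1}$ also confirms that the dimension of the kernel (as a $\mathbb{Z}$-module) is $N-1$, consistent with the stated size of $A^{1}$.

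There is no real obstacle in this theorem: the difficulty one might anticipate, namely ensuring that the parametrisation produces only and all integer solutions, collapses because the last coordinate is linearly determined by the first $N-1$ with integer coefficients $\pm 1$. The argument is thus purely bookkeeping, and the main point worth writing carefully is the identification $U^{(c)}_{0p}=1/\sqrt{N}$ that collapses the kernel condition to a single sum constraint, since the entire structure of the theorem hinges on this simplification being correct.
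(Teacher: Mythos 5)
Your proposal is correct and follows essentially the same route as the paper: reduce the $q=0$ kernel condition to $\sum_p n_p=0$ and observe that $A^{1}$ parametrises exactly its integer solutions via $n_{N-1}=-\sum_{i=0}^{N-2}m_i$. Your explicit check of the converse direction is slightly more detailed than the paper's proof, but the substance is the same.
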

\begin{proof}
For $q=0$ Eq.~\eqref{eq:appendix cyc} reads $\sum_{p=0}^{N-1}n_p=0$ with the solution $n_{N-1}=-\sum_{p=0}^{N-2}n_p$. Thus, any $N-$dimensional vector $\vec{n}$ such that $\vec{n}=A\vec{m}$ is a solution to Eq.~\eqref{eq:appendix cyc}, for the $\left(N \times N-1\right)-$dimensional matrix $A$ with elements $A_{ij}= \delta_{i,j}-\delta_{i,N-1}$, and for any $\left(N-1\right)-$dimensional vector of integers $\vec{m}$.
\end{proof}

\begin{lemma}
\label{lemma:cyclic irreducible}
For a prime number $p$, the $p'$th cyclotomic polynomial is given by $f = 1 + \omega + \omega^2 + \dots \omega^{p-1}$, and is irreducible over the rational numbers \cite{AbstractAlgebraGarrett}.
\end{lemma}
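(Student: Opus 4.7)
The plan is to prove the two assertions separately: first the explicit formula for $\Phi_p$, and then irreducibility via Eisenstein's criterion applied after a shift of variable.

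For the formula, I would start from the factorization $x^p - 1 = \prod_{k=0}^{p-1}(x - \omega^k)$ where $\omega = e^{2\pi i/p}$. Since $p$ is prime, every $\omega^k$ with $1 \le k \le p-1$ is a primitive $p$-th root of unity, so the $p$-th cyclotomic polynomial is $\Phi_p(x) = \prod_{k=1}^{p-1}(x - \omega^k) = (x^p - 1)/(x - 1)$. Performing the polynomial division (or summing the geometric series) yields $\Phi_p(x) = 1 + x + x^2 + \cdots + x^{p-1}$, which is the claimed expression.

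For irreducibility, I would use the standard trick of shifting $x \mapsto x+1$, since a polynomial $f(x) \in \mathbb{Q}[x]$ is irreducible over $\mathbb{Q}$ if and only if $f(x+1)$ is. I compute
\begin{equation}
\Phi_p(x+1) = \frac{(x+1)^p - 1}{x} = \sum_{k=1}^{p} \binom{p}{k} x^{k-1} = x^{p-1} + \binom{p}{p-1} x^{p-2} + \cdots + \binom{p}{2} x + p\ .
\end{equation}
The leading coefficient equals $1$, which is not divisible by $p$; each intermediate coefficient $\binom{p}{k}$ for $1 \le k \le p-1$ is divisible by $p$, since the factor $p$ in the numerator of $\binom{p}{k} = p!/(k!(p-k)!)$ cannot be cancelled when $0<k<p$; and the constant term is $\binom{p}{1} = p$, which is not divisible by $p^2$. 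Eisenstein's criterion with respect to the prime $p$ then applies directly, giving irreducibility of $\Phi_p(x+1)$ over $\mathbb{Q}$, hence of $\Phi_p(x)$ itself.

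The only delicate step is the divisibility claim $p \mid \binom{p}{k}$ for $1 \le k \le p-1$, which I would justify by writing $k! \binom{p}{k} = p(p-1)\cdots(p-k+1)$ and noting that $p$ divides the right-hand side while $\gcd(k!, p) = 1$ because $k < p$ and $p$ is prime; hence $p \mid \binom{p}{k}$. Everything else is a direct binomial computation and a routine application of Eisenstein, so I expect no serious obstacle beyond this arithmetic observation.
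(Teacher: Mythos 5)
Your proof is correct and complete: the identity $\Phi_p(x)=(x^p-1)/(x-1)=1+x+\cdots+x^{p-1}$ for prime $p$, followed by the shift $x\mapsto x+1$ and Eisenstein's criterion at $p$ (with the divisibility $p\mid\binom{p}{k}$ for $1\le k\le p-1$ justified exactly as needed), is the standard argument. The paper itself does not prove this lemma but simply cites it as a known result from the abstract-algebra literature, and your argument is precisely the classical proof that reference contains, so there is nothing to add or correct.
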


\begin{theorem}
\label{theorem:cyc,prime N}
If $N$ is a prime number the only solution to Eq.~\eqref{eq:appendix cyc} is $n_p=n_0$ for all $p$.
\end{theorem}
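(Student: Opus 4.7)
The plan is to translate Eq.~\eqref{eq:appendix cyc} into a divisibility statement about polynomials over $\mathbb{Q}$ and apply Lemma~\ref{lemma:cyclic irreducible}. The case $q=0$ is already handled by Theorem~\ref{theorem:cyc,1}, so I would restrict to $q \in \{1,\dots,N-1\}$; since $N$ is prime this automatically gives $\gcd(N,q)=1$, so $\omega := \exp(-i\tfrac{2\pi}{N}q)$ is a primitive $N$-th root of unity.

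Setting $P(x) := \sum_{p=0}^{N-1} n_p x^p \in \mathbb{Z}[x]$, Eq.~\eqref{eq:appendix cyc} is precisely the statement $P(\omega)=0$. By Lemma~\ref{lemma:cyclic irreducible}, the $N$-th cyclotomic polynomial $\Phi_N(x) = 1 + x + x^2 + \cdots + x^{N-1}$ is irreducible over $\mathbb{Q}$; because $\omega$ is a primitive $N$-th root of unity and $\Phi_N(\omega)=0$, $\Phi_N$ is the minimal polynomial of $\omega$ over $\mathbb{Q}$. Consequently $\Phi_N$ divides $P$ in $\mathbb{Q}[x]$. Since $\deg P \le N-1 = \deg \Phi_N$, this forces $P = c\,\Phi_N$ for some $c \in \mathbb{Q}$ (with $c=0$ allowed). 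Comparing any coefficient of $x^p$ gives $n_p = c$ for all $p$; in particular $c = n_0 \in \mathbb{Z}$, so $n_p = n_0$ for every $p \in \{0,1,\dots,N-1\}$, as claimed.

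There is no genuine obstacle here beyond correctly invoking the lemma: the single point requiring a word of care is the observation that the rational multiplier $c$ is automatically integral because $P$ has integer coefficients, which is immediate from comparing the leading (or any) coefficient of the two sides of $P = c\,\Phi_N$. The trivial subcase $P\equiv 0$ corresponds to $n_0 = 0$ and is consistent with the conclusion $n_p = n_0$ for all $p$.
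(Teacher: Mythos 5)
Your proposal is correct and follows essentially the same route as the paper: both invoke the irreducibility of the prime cyclotomic polynomial $1+x+\cdots+x^{N-1}$ (Lemma~\ref{lemma:cyclic irreducible}) at the primitive root $\omega=e^{-2\pi i q/N}$, which exists for every $q\neq 0$ precisely because $N$ is prime. Your phrasing via minimal-polynomial divisibility and a degree count is simply a cleaner formalization of the paper's ``linear independence over $\mathbb{Q}$'' argument, and your explicit exclusion of $q=0$ tidies up a point the paper glosses over.
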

\begin{proof}
For a prime number $N$, there are $N$ distinct roots of unity given by the equation
\begin{subequations}
\begin{equation}
    \omega^N=1 \iff \omega^N-1=0.
\end{equation}
The right-hand side can be factored into product of two polynomials such that
\begin{equation}
    (1+\omega+\omega^2+\dots + \omega^{N-1})(\omega-1)=0\ .
\end{equation}
\end{subequations}
The first polynomial of the product,
\begin{equation}
    f(\omega):=\sum_{p=0}^{N-1}\omega^p,
\end{equation} 
is irreducible over the rational (and integer) numbers as given by Lemma \ref{lemma:cyclic irreducible}. Thus, the roots of the equation $f=0$ are linearly independent over the field of rational numbers $\mathbb{Q}$, i.e., the unique way to express each one of the roots $\{1,e^{-i\frac{2\pi}{N}q},e^{-i\frac{2\pi}{N}q2}, \dots, e^{-i\frac{2\pi}{N}q(p-1)}\}$ as a function of all other roots, is given by the equation $f=0$. Hence, up to a multiplication with any integer, these roots of unity can sum up to zero only with all their coefficients being 1. Hence, for any prime $N$
\begin{subequations}
\begin{equation}
    \sum_{p=0}^{N-1}n_pe^{-i\frac{2\pi}{N}p}=0 \iff n_p=n_0 \quad  \forall p\in[1,N-1]\ .
\end{equation}

In addition, for a prime number $N$, the roots $e^{-i\frac{2\pi}{N}m}$ with any $m=0,1,\dots,N-1$ are primitive roots. That is, the set of taking all the powers of the root from $0$ to $N-1$ generates $N$ distinct roots of unity. Hence, for any $q\in[0,p-1]$ the set $\{e^{-i\frac{2\pi}{N}qp}\}_{p=0}^{N-1}$ is equivalent to the set $\{1,e^{-i\frac{2\pi}{N}q},e^{-i\frac{2\pi}{N}2}, \dots, e^{-i\frac{2\pi}{N}(N-1)}\}$, so the same argument as above holds. Thus for any $q\in[0,N-1]$ the following relation holds

\begin{equation}
\label{equation: cyclic prime}
    \sum_{p=0}^{N-1}n_pe^{-i\frac{2\pi}{N}qp}=0 \iff n_0=n_p \quad \forall p\in [1,N-1]\ .
\end{equation}
\end{subequations}
\end{proof}

\begin{theorem}
\label{theorem:cyc,K=1}
For $K=1$, each prime factor $p$ of the dimension $N$ has an associated class of states \ls that are within the kernel of $\hat{S}^{(c)}_q$. The class of states is defined according to Eq.~\eqref{eq:nAm}, with the $\left(N \times \frac{N}{p} \right)-$dimensional constraints matrix $A^{2,p}$ with elements $A^{2,p}_{ij} = \delta_{i\%\frac{N}{p},j}$. 
\end{theorem}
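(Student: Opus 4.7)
The plan is to unfold the matrix equation $\vec{n} = A^{2,p}\vec{m}$ into the explicit pointwise form $n_i = m_{i\%(N/p)}$, which says that the integer vector $\vec n$ is periodic with period $N/p$. Substituting this into the kernel-defining equation~\eqref{eq:appendix cyc}, I would then split the sum over $i = 0, \dots, N-1$ into two nested sums, with an outer index $j = 0, \dots, N/p - 1$ labeling the residue of $i$ modulo $N/p$, and an inner index $k = 0, \dots, p-1$ labeling which period we are in, so that $i = j + k(N/p)$.

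The second step is a factorization. Under this reindexing, the exponential splits as $\exp(-i 2\pi q(j + kN/p)/N) = \exp(-i 2\pi q j / N) \cdot \exp(-i 2\pi q k/p)$, and the $m_j$ factor only depends on $j$. The sum therefore factors as
\begin{equation}
\sum_{j=0}^{N/p-1} m_j \exp\!\left(-i\frac{2\pi q}{N}j\right) \cdot \sum_{k=0}^{p-1}\exp\!\left(-i\frac{2\pi q}{p}k\right).
\end{equation}
It suffices to show that the second factor vanishes, regardless of the values of the free integers $m_j$.

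The number-theoretic step is to use $K=\gcd(N,q)=1$. Since $p$ is a prime divisor of $N$, any common factor of $p$ and $q$ would also be a common factor of $N$ and $q$, so $\gcd(p,q)=1$. Hence $q \not\equiv 0 \pmod{p}$, so $\zeta := \exp(-i 2\pi q/p)$ is a nontrivial $p$-th root of unity. The inner sum is the geometric series $\sum_{k=0}^{p-1}\zeta^k = (\zeta^p - 1)/(\zeta - 1) = 0$, which finishes the argument.

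I expect the only subtle point to be the elementary number-theoretic observation $\gcd(p,q)=1$; the indexing gymnastics in the first step are routine but must be written carefully to make clear that the decomposition $i = j + k(N/p)$ is a bijection between $\{0,\dots,N-1\}$ and $\{0,\dots,N/p-1\}\times\{0,\dots,p-1\}$. A remark that the construction works for every prime factor $p$ of $N$, producing one constraints matrix per such $p$, can close the proof.
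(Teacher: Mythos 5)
Your proof is correct. The reindexing $i=j+k(N/p)$ is exactly the decomposition the paper uses (there the inner sums are packaged as $N/p$ polynomials $f_l(\omega)=\sum_{m=0}^{p-1}n_{l+mN/p}\,\omega^m$ evaluated at $\omega=e^{-i2\pi q/p}$), and your observation that $\gcd(N,q)=1$ forces $\gcd(p,q)=1$ is the same implicit ingredient. Where you genuinely diverge is in how the vanishing is justified: you factor the whole sum into a product and kill the second factor with an elementary geometric series, $\sum_{k=0}^{p-1}\zeta^k=0$ for the nontrivial $p$-th root of unity $\zeta=e^{-i2\pi q/p}$, whereas the paper closes the argument by invoking its Theorem~\ref{theorem:cyc,prime N}, which rests on the irreducibility of the $p$-th cyclotomic polynomial (Lemma~\ref{lemma:cyclic irreducible}). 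For the statement as claimed --- a pure sufficiency statement, that the periodic states lie \emph{within} the kernel --- your lighter route is entirely adequate and arguably cleaner, since irreducibility is not needed to show the ansatz solves the equation. What the paper's heavier machinery buys is the converse direction at the block level: it shows that demanding each $f_l$ vanish individually \emph{forces} the coefficients in each residue class to be equal, which motivates why this particular class of states is the natural building block in the broader classification of the kernels. Your closing remark about one constraints matrix per prime factor $p$ matches the paper's statement, and your explicit attention to the bijectivity of $i\mapsto(j,k)$ and to $\gcd(p,q)=1$ makes the argument self-contained.
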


\begin{proof}
For every prime coefficient $p$ of $N$, Eq.~\eqref{eq:appendix cyc} can be explicitly written as:

\begin{subequations}
\label{eq:appendix,cyc,case 2}
\be
    \sum_{l=0}^{\frac{N}{p}-1}e^{-i\frac{2\pi }{N}ql}f_l(e^{-i\frac{2\pi}{p}q}) = 0\ ,
\ee
with:
\be
\label{appendix:f_l}
    f_l(\omega) := \sum_{m=0}^{p-1}n_{l+m\frac{N}{p}}\omega^{m}\ .
\ee 
\end{subequations}

The polynomial $f_l(\omega)$ follows Theorem \ref{theorem:cyc,prime N} for prime dimensions. 
Thus, Eq.~\eqref{eq:appendix,cyc,case 2} can be solved if all the integer coefficients $n_{l+m\frac{N}{p}}$ for each polynomial $f_l$ are equal. 

This results in $\frac{N}{p}$ degrees of freedom, where all states of the form $\ket{n_0,\hdots,n_{\frac{N}{p}-1},n_0,\hdots,n_{\frac{N}{p}-1},n_0,\hdots}$ are within the kernel of $\hat{S}^{(c)}_q$. Therefore, all lattice states \ls with $\vec{n}=A^{2,p}\vec{m}$ for an $\frac{N}{p}$-dimensional vector $\vec{m}$ and an $\left(N \times \frac{N}{p} \right)-$dimensional matrix $A^{2,p}$ with elements $A^{2,p}_{ij} = \delta_{i\%\frac{N}{p},j}$ are within the kernel of $\hat{S}^{(c)}_q$ for $K=1$.
\end{proof}

\begin{theorem}
\label{theorem:cyclic, case 4}
For $K>1$, for any prime factor $p$ of $\frac{N}{K}$ there is a class of states \ls that are within the kernel of $\hat{S}^{(c)}_q$. This class is defined according to Eq.~\eqref{eq:nAm} with the constraints matrix $A^{4,p}$ with dimensions $\left(N \times \left(N-\frac{N}{Kp}(p-1)\right)\right)$ and with elements defined in Eq.~\eqref{eq:symmetry,cyclic,4}.
\end{theorem}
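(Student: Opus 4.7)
The plan is to reduce this theorem to the already-proved case $K=1$ (Theorem~\ref{theorem:cyc,K=1}) by means of the change of variables from Appendix~\ref{appendix:extension}. Setting $\tilde N = N/K$, $\tilde q = q/K$ and $y_l = \sum_{k=0}^{K-1} n_{l+k\tilde N}$, the defining equation \eqref{eq:appendix cyc} rewrites as $\sum_{l=0}^{\tilde N-1} y_l \exp(-i\,2\pi\tilde q l/\tilde N)=0$ with $\gcd(\tilde N,\tilde q)=1$. Applying Theorem~\ref{theorem:cyc,K=1} to this reduced equation shows that, for every prime factor $p$ of $\tilde N = N/K$, one class of integer solutions consists of those $(y_0,\dots,y_{\tilde N-1})$ that are periodic with period $M := \tilde N/p = N/(Kp)$, i.e. $y_l = y_{l \bmod M}$ for all $l$.

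The next step is to translate this periodicity back into relations on the original coefficients $n_i$. The equations $y_l - y_{l+rM}=0$ for $r=1,\dots,p-1$ and $l=0,\dots,M-1$ give $M(p-1) = N(p-1)/(Kp)$ linear constraints on the $N$ integers $n_i$, so the integer solution lattice has rank $N - N(p-1)/(Kp)$, matching the column count of the claimed $A^{4,p}$. I would then construct an explicit $\mathbb{Z}$-basis by declaring a convenient set of $n_i$ to be free parameters $m_j$ and solving the system for the remaining $n_i$. The structure of Eq.~\eqref{eq:symmetry,cyclic,4} suggests this choice: within each of the $K$ blocks of $\tilde N$ consecutive indices (indexed by $k\in\{0,\dots,K-1\}$), the "first" $M$ entries are free while the remaining $(p-1)M$ entries are fixed by periodicity of $y_l$, and for each $k\ge 1$ an additional compensating shift is required so that the $K$ partial sums assemble correctly into the block-sum $y_l=y_{l\bmod M}$. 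The four $\delta$-terms in Eq.~\eqref{eq:symmetry,cyclic,4} should correspond precisely to these four bookkeeping roles (diagonal free entries, intra-block periodic copies, inter-block periodic copies, and the corresponding subtractions).

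The main obstacle is the verification of the explicit formula. One must check both (a) that every $\vec n = A^{4,p}\vec m$ actually satisfies the periodicity constraint on the $y_l$'s, and (b) \emph{saturation}, namely that every integer $\vec n$ satisfying the constraints is expressible with integer $\vec m$. Part (a) reduces to showing, column by column, that the contributions to $y_l - y_{l+M}$ telescope to zero, which I expect to follow by a direct case analysis on the index $j$ of each column of $A^{4,p}$. Part (b) is the delicate point, because a finite-index sublattice of solutions would also have the right rank but miss some integer vectors; my plan is to exhibit an $(N - N(p-1)/(Kp))\times(N - N(p-1)/(Kp))$ identity submatrix of $A^{4,p}$ by restricting to the rows indexed by the "free" entries identified above, which furnishes an integer left inverse on that subset and forces $\vec m$ to be integer whenever $\vec n$ is. Together these two points show that $\vec n = A^{4,p}\vec m$ parametrises exactly the integer solution class associated with the prime factor $p$.
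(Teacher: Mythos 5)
Your reduction---defining the effective dimension $\tilde N = N/K$ and index $\tilde q = q/K$, forming the block sums $y_l=\sum_{k=0}^{K-1} n_{l+k\tilde N}$, invoking Theorem~\ref{theorem:cyc,K=1} to obtain the periodicity $y_l=y_{l\bmod N/(Kp)}$, and substituting back to read off the constraints encoded by $A^{4,p}$ together with the dimension count $N-\frac{N}{Kp}(p-1)$---is essentially identical to the paper's proof of Theorem~\ref{theorem:cyclic, case 4}. Your step (b) (saturation via an integer left inverse) goes beyond what the statement requires, since the class is \emph{defined} as the integer image of $A^{4,p}$ and only its inclusion in the kernel must be verified; the paper likewise stops at the back-substitution and the counting of free modes.
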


\begin{proof}
For $K>1$, an effective dimension is defined as $\tilde{N}=\frac{N}{K}$ and an effective coefficient as $\tilde{q}=\frac{q}{K}$ such that $\text{gcd}(\tilde{N},\tilde{q})=1$. Eq.~\eqref{eq:appendix cyc} reduces to
\begin{subequations}
    \be
\label{eq:appendix,cyc,K>1}
\sum_{l=0}^{\tilde{N}-1}y_l\exp\left(-i\frac{2\pi}{\tilde{N}}\tilde{q}l\right)=0\ , 
\ee
where
\be
\label{eq:y_l definition}
y_l := \sum_{k=0}^{K-1}n_{l+k\tilde{N}}\ .
\ee
\end{subequations}

The solution in terms of the coefficients $y_l$ is given by Theorem \ref{theorem:cyc,K=1} such that for each prime factor $p$ of $\tilde{N}$, the solution has $\frac{\tilde{N}}{p}$ free integer coefficients and it reads
\begin{subequations}    
\be
\label{eq:y_l solution}
    y_l = y_{[l]}\ ,
\ee
with $[l]$ denoting $l$ mod $\frac{\tilde{N}}{p}$. 

Substituting Eq.~\eqref{eq:y_l definition} into Eq.~\eqref{eq:y_l solution} leads to the solution:
\be
    n_j = n_{[j]} + \sum_{k=1}^{K-1}n_{[k\frac{N}{K}+j]}-n_{k\frac{N}{K}+j-N}\ .
\ee
\end{subequations}
For each prime factor $p$ of $\tilde{N}$ there are $\tilde{N}-\frac{\tilde{N}}{p}$ constraints and  $N-\tilde{N}-\frac{\tilde{N}}{p}$ free modes. These are encapsulated by the constraints matrix $A^{4,p}$ with dimensions $\left(N \times \left(N-\frac{N}{Kp}(p-1)\right)\right)$, and with elements defined in Eq.~\eqref{eq:symmetry,cyclic,4}.
\end{proof}

\subsection{Nega-cyclic Cases (I),(II) and (III)}
\label{Appendix: nega-cyclic 1,2,3}

The following theorems refer to lattices with nega-cyclic symmetry. The lattice states that span the kernel are all states with integer coefficients that solve Eq.~\eqref{eq:appendix negac-cyclic}. The greatest common divisor of $2N$ and $2q+1$ is denoted by $L$.

\begin{lemma}
\label{lemma: characteristic of cyclo}
The $n-$th primitive root of unity $\zeta_n$, is a root of a given polynomial $f(\zeta_n)=0$ \textit{iff} the $n-$th cyclotomic polynomial $\Phi_n(\omega)$, divides the polynomial $f$, that is $\Phi_n(\omega)|f(\omega$) \cite{AbstractAlgebraGarrett}.
\end{lemma}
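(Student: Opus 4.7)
The plan is to prove both directions by invoking the standard fact that the $n$-th cyclotomic polynomial $\Phi_n(\omega)$ is the minimal polynomial of the primitive $n$-th root of unity $\zeta_n$ over the rationals. This irreducibility-over-$\mathbb{Q}$ statement is itself the nontrivial ingredient (it is the same fact that Lemma~\ref{lemma:cyclic irreducible} cites for the special case of prime $n$), and I would simply quote it from the reference, since once it is available, both directions of the lemma fall out from elementary polynomial arithmetic.

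For the easy direction ($\Leftarrow$), if $\Phi_n(\omega)$ divides $f(\omega)$ in $\mathbb{Q}[\omega]$, then I can write $f(\omega)=\Phi_n(\omega)\,g(\omega)$ for some $g\in\mathbb{Q}[\omega]$, and evaluating at $\omega=\zeta_n$ gives $f(\zeta_n)=\Phi_n(\zeta_n)\,g(\zeta_n)=0$ because $\zeta_n$ is by definition a root of $\Phi_n$.

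For the harder direction ($\Rightarrow$), suppose $f(\zeta_n)=0$. The plan is to apply the Euclidean division algorithm in $\mathbb{Q}[\omega]$ to obtain $f(\omega)=\Phi_n(\omega)\,q(\omega)+r(\omega)$ with $\deg r<\deg\Phi_n=\varphi(n)$. Evaluating at $\zeta_n$ yields $0=f(\zeta_n)=0+r(\zeta_n)$, so $r(\zeta_n)=0$. Since $\Phi_n$ is the minimal polynomial of $\zeta_n$ over $\mathbb{Q}$, no nonzero rational polynomial of degree strictly less than $\varphi(n)$ can vanish at $\zeta_n$, forcing $r\equiv 0$. Hence $\Phi_n(\omega)\mid f(\omega)$ in $\mathbb{Q}[\omega]$. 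If one additionally wants divisibility in $\mathbb{Z}[\omega]$ when $f\in\mathbb{Z}[\omega]$ (which is the setting used in the paper's applications to integer coefficients $n_p$), one appeals to Gauss's lemma together with the fact that $\Phi_n$ is monic with integer coefficients to lift the divisibility from $\mathbb{Q}[\omega]$ to $\mathbb{Z}[\omega]$.

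The main obstacle is thus not in the argument itself but in the background fact that $\Phi_n$ is irreducible (equivalently, is the minimal polynomial of $\zeta_n$); this classical result requires either a Galois-theoretic argument using the fact that the Galois group of $\mathbb{Q}(\zeta_n)/\mathbb{Q}$ acts transitively on primitive $n$-th roots of unity, or an elementary but careful argument using reduction mod $p$ for primes $p\nmid n$. Since the lemma is cited to a standard reference, I would simply invoke this result rather than reprove it, and the remainder of the proof is the two short paragraphs above.
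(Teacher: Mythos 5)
Your proposal is correct. Note that the paper itself does not prove this lemma at all: it is stated as a known result and delegated entirely to the cited reference (Garrett), exactly as with Lemma~\ref{lemma:cyclic irreducible}. Your argument is the standard one and fills that gap soundly: the backward direction is immediate by evaluation, and the forward direction follows from division with remainder in $\mathbb{Q}[\omega]$ plus the fact that $\Phi_n$ is the minimal polynomial of $\zeta_n$ (equivalently, its irreducibility over $\mathbb{Q}$), which you rightly quote rather than reprove since it is the genuinely nontrivial classical ingredient. Your additional remark about passing from divisibility in $\mathbb{Q}[\omega]$ to $\mathbb{Z}[\omega]$ via Gauss's lemma and the monicity of $\Phi_n$ is a worthwhile refinement, since the paper applies the lemma (in Theorem~\ref{theorem:nega,case 1}) to polynomials with integer coefficients $n_l$; in fact for that application even a degree comparison in $\mathbb{Q}[\omega]$ suffices, so nothing in the paper's use of the lemma is endangered.
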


\begin{theorem}
\label{theorem:nega,case 1}
    If $N=2^m$ for some non-negative integer $m$, the kernel of $\hat{S}^{(n)}_q$ is spanned by the zero state $\ket{\vec{0}}$.
\end{theorem}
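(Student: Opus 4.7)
The plan is to reduce the equation defining the kernel to a statement about cyclotomic polynomials and to exploit the coincidence $\varphi(2N)=N$ when $N$ is a power of two.

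First I would rewrite Eq.~\eqref{eq:appendix negac-cyclic} as the polynomial identity $f(\zeta)=0$, where $\zeta:=\exp(-i\pi(2q+1)/N)$ and $f(\omega):=\sum_{p=0}^{N-1}n_p\omega^p$. Since $N=2^m$, we have $2N=2^{m+1}$, and because $2q+1$ is odd while $2N$ is a power of two, $\gcd(2q+1,2N)=1$. Therefore $\zeta$ is a primitive $2N$-th root of unity, and its minimal polynomial over $\mathbb{Q}$ is the $2N$-th cyclotomic polynomial $\Phi_{2N}$.

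Next I would invoke Lemma~\ref{lemma: characteristic of cyclo}: the vanishing $f(\zeta)=0$ forces $\Phi_{2N}(\omega)\mid f(\omega)$. For $2N=2^{m+1}$ one has the explicit form $\Phi_{2N}(\omega)=\omega^{N}+1$, so $\deg\Phi_{2N}=\varphi(2N)=2^{m}=N$. But by construction $\deg f\leq N-1<N=\deg\Phi_{2N}$, so the only way $\Phi_{2N}$ can divide $f$ is if $f$ is identically zero, i.e.\ $n_p=0$ for every $p=0,\dots,N-1$.

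This shows that the sole integer solution to Eq.~\eqref{eq:appendix negac-cyclic} in the case $N=2^{m}$ is the trivial one, so the kernel of $\hat{S}^{(n)}_q$ reduces to the span of $\ket{\vec 0}$. I do not expect a real obstacle here: the argument is essentially the observation that $\varphi(2N)$ exactly matches the number of available integer coefficients, leaving no nontrivial integer relation among the phase factors. The only subtlety worth spelling out is the verification that $\zeta$ is indeed a \emph{primitive} $2N$-th root, which is why the hypothesis that $N$ is a power of two is needed (otherwise $2q+1$ could share an odd factor with $2N$ and the order of $\zeta$ would drop, shrinking $\varphi$ and admitting nontrivial solutions).
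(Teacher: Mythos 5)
Your argument is correct and is essentially the paper's own proof: both rest on Lemma~\ref{lemma: characteristic of cyclo} applied to $\Phi_{2N}(\omega)=\omega^{N}+1$ and the degree count $\deg f\leq N-1<N=\varphi(2N)$, with the general $q$ handled by noting that $\exp\bigl(-i\pi(2q+1)/N\bigr)$ is again a primitive $2N$-th root of unity when $N$ is a power of two. No substantive difference from the paper's route.
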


\begin{proof}
For the case of $q=0$, Eq.~\eqref{eq:appendix negac-cyclic} reduces to $f(\omega)=0$ with $\omega=\exp\left(-i\frac{2\pi}{2N}\right)$ and 
\be
    f(\omega):=\sum_{l=0}^{N-1}n_l\omega^l\ .
\ee
The $2N-$th cyclotomic polynomial is given by $\Phi_{2N}(\omega)=\omega^{N}+1$. The degree of the cyclotomic polynomial is $N$ whilst the degree of the the polynomial $f(\omega)$ is $N-1$. Hence the cyclotomic polynomial cannot divide the $f(\omega)$, and according to Lemma \ref{lemma: characteristic of cyclo} there are no primitive roots for the dimension $2N$ that are roots of the polynomial $f$, specifically $\omega=\exp\left(-i\frac{2\pi}{2N}\right)$. That is, there is no solution to the equation $f(\exp\left(-i\frac{2\pi}{2N}\right))$ unless all the coefficients $n_l$ are zero, i.e., the trivial solution. 

For other cases $q>0$, the same argument applies, since $\omega^{2q+1}$ ($1\leq q \leq N-1$) are exactly the other roots of the cyclotomic polynomial $\Phi_{2N}(\omega)$.
\end{proof}

\begin{theorem}
\label{theorem:negacylic, prime number}
If $N$ is a prime number, the only non-trivial solution for Eq.~\eqref{eq:appendix negac-cyclic} is $n_0=-n_1=n_2=\hdots=-n_{N-2}=n_{N-1}$.
\end{theorem}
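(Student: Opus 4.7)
The plan is to reduce the nega-cyclic kernel equation to the cyclic one for which Theorem~\ref{theorem:cyc,prime N} already provides a complete characterisation of integer solutions. Since Theorem~\ref{theorem:nega,case 1} already disposes of the case $N=2^m$, I may restrict attention to $N$ being an odd prime, so that $N\geq 3$ and in particular every integer in $\{0,1,\dots,N-1\}$ is congruent to some $2p' \pmod N$ (because $2$ is invertible mod $N$).

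First I would focus on the case $q=0$, consistent with the discussion in the main text; the phases in Eq.~\eqref{eq:appendix negac-cyclic} are then $\omega_p=-\pi p/N$. I would perform exactly the substitution introduced in the main text: apply the identity $n_p\exp(i\omega_p)=-n_p\exp(i(\omega_p+\pi))$ to every odd-indexed term, i.e.\ rewrite $n_p \exp(-i\pi p/N)$ with $p$ odd as $-n_p \exp(-i\pi(p-N)/N)$. Since $N$ is odd, $p-N$ is even whenever $p$ is odd, so after renaming indices via the bijection $p\mapsto p'$ with $p'=p/2$ for even $p$ and $p'=(p+N)/2$ for odd $p$, the original sum becomes $\sum_{p'=0}^{N-1}y_{p'}\exp(-i2\pi p'/N)$, where
\begin{equation}
y_{p'}=\begin{cases}\ \ n_{2p'}&\text{if }p'\leq(N-1)/2,\\ -n_{2p'-N}&\text{if }p'\geq(N+1)/2.\end{cases}
\end{equation}
The transformed equation is now a cyclic-symmetry constraint of exactly the type covered by Eq.~\eqref{equation: cyclic prime} in Theorem~\ref{theorem:cyc,prime N}.

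Second, I would invoke Theorem~\ref{theorem:cyc,prime N} (with $q$-index equal to $1$) to conclude that the only integer solutions of the transformed equation are those with $y_0=y_1=\cdots=y_{N-1}$. Undoing the substitution yields $n_0=n_2=n_4=\cdots=n_{N-1}=-n_1=-n_3=\cdots=-n_{N-2}$, which is precisely the alternating pattern $n_0=-n_1=n_2=\cdots=-n_{N-2}=n_{N-1}$ claimed in the theorem. To cover general $q>0$, I would note that when $N$ is an odd prime, $\gcd(2q+1,2N)\in\{1,N\}$: when the gcd is $1$, the exponents $\{(2q+1)p\bmod 2N\}_{p=0}^{N-1}$ form a permutation of $\{0,1,\dots,N-1\}$ (together with sign flips that mirror the substitution above), so the very same argument applies after relabelling; the remaining case falls under Theorem~\ref{theorem:nega,case 1}-type triviality.

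The main obstacle is not difficulty but bookkeeping: verifying that the index map $p\mapsto p'$ is a bijection of $\{0,\dots,N-1\}$ onto itself and that the sign flips are exactly applied to odd-indexed coefficients. This is where primality of $N$ (ensuring $N$ is odd and $2$ is invertible mod $N$) is essential, and it is also where the proof would fail for composite or even $N$ — consistent with the classification of cases (I)--(IV) in Sec.~\ref{sec:symmetry, nega-cyclic}.
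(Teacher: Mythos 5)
Your proposal is essentially the paper's own proof: negate the odd-indexed coefficients (add $\pi$ to their phases), reindex so the sum becomes $\sum_l y_l\exp\bigl(-i\tfrac{2\pi}{N}(2q+1)l\bigr)$ with exactly the $y_l$ you define, invoke Theorem~\ref{theorem:cyc,prime N}, and substitute back to obtain the alternating pattern; the paper simply carries the general index $2q+1$ through this substitution in one step rather than treating $q=0$ first and relabelling for $q>0$. One remark in your sketch is off: when $N\mid 2q+1$ (i.e.\ $q=(N-1)/2$, the case $\gcd(2q+1,2N)=N$), the equation degenerates to $\sum_p(-1)^p n_p=0$, whose solution space is $(N-1)$-dimensional, so it is not a ``Theorem~\ref{theorem:nega,case 1}-type'' trivial kernel; the paper glosses over this subcase as well, and it is excluded in the setting where the theorem is actually applied (case (II), $L=1$), so your main argument stands.
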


\begin{proof}
By adding $\pi$ radians to the phase of each odd-indexed integer coefficient $n_l$ (i.e., $l$ is odd), Eq.~\eqref{eq:appendix negac-cyclic} can be re-written as:
\begin{subequations}    
\label{eq:appendix,nega-cyclic,prime}
\be
    \sum_{l=0}^{N-1}y_l\exp\left(-i\frac{2\pi}{N}(2q+1)l\right)\ ,
\ee
with
\be 
y_l = \begin{cases}
        \quad\   n_{2l} & \text{if } l \leq \frac{N-1}{2} \\
        -n_{2l-N} & \text{if } l \geq \frac{N+1}{2}
      \end{cases}\ .
\ee
\end{subequations}
The solution is given by Theorem \ref{theorem:cyc,prime N} in terms of the integer coefficients $y_l$. Substituting the original coefficients leads to the only solution $n_0=-n_1=n_2=\hdots=-n_{N-2}=n_{N-1}$.
\end{proof}

\begin{theorem}
\label{theorem:nega-cyclic,case 2}
    If $L=1$ and $N$ is not some power of two, for each prime factor $p$ of the dimension $N$, all states \ls defined according to Eq.~\eqref{eq:nAm} with the constraints matrix $A^{II,p}$ are within the kernel of $\hat{S}^{(n)}_q$. The constraints matrix $A^{II,p}$ is $\left(N \times \frac{N}{p}\right)$-dimensional and its elements are defined according to Eq.~\eqref{eq:symmetry,nega,2}.
\end{theorem}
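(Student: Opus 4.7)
The plan is to mirror the structure of Theorem~\ref{theorem:cyc,K=1}, with the cyclotomic polynomial $\Phi_p$ replaced by $\Phi_{2p}$ in order to absorb the nega-cyclic phase shift by $\pi$. Concretely, I would verify directly that any $\vec{n}=A^{II,p}\vec{m}$ satisfies Eq.~\eqref{eq:appendix negac-cyclic}, by factoring the sum into a product whose second factor is exactly $\Phi_{2p}$ evaluated at a primitive $2p$-th root of unity.

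First I would pick an odd prime factor $p$ of $N$ (which exists since $N$ is not a power of two), and decompose each index $j\in\{0,\dots,N-1\}$ uniquely as $j = l + m\frac{N}{p}$ with $l\in\{0,\dots,\frac{N}{p}-1\}$ and $m\in\{0,\dots,p-1\}$. For this decomposition one has $\lfloor jp/N \rfloor = m$ and $j\%\frac{N}{p} = l$, so Eq.~\eqref{eq:symmetry,nega,2} yields $n_j = (-1)^m m_l$. Substituting into Eq.~\eqref{eq:appendix negac-cyclic} and separating the $l$- and $m$-dependencies gives
\be
\sum_{j=0}^{N-1} n_j\, e^{-i\frac{\pi}{N}j(2q+1)} = \left(\sum_{l=0}^{\frac{N}{p}-1} m_l\, e^{-i\frac{\pi}{N}l(2q+1)}\right)\left(\sum_{m=0}^{p-1}(-1)^m\omega^m\right)\ ,
\ee
with $\omega := \exp\!\left(-i\frac{\pi}{p}(2q+1)\right)$.

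Second I would identify the inner sum with $\Phi_{2p}(\omega)$. For an odd prime $p$ the identity $\Phi_{2p}(x)=\Phi_p(-x)=\sum_{m=0}^{p-1}(-1)^m x^m$ follows from $(-1)^{p-1}=1$ and a short computation on $\Phi_p(x)=\sum_{m=0}^{p-1}x^m$. To guarantee $\Phi_{2p}(\omega)=0$ via Lemma~\ref{lemma: characteristic of cyclo}, I would show $\omega$ is a primitive $2p$-th root of unity. Since $p\,|\,N$, we have $2p\,|\,2N$, so $\gcd(2p,2q+1)$ divides $\gcd(2N,2q+1)=L=1$. Writing $\omega = e^{-2\pi i(2q+1)/(2p)}$ with the exponent coprime to $2p$, the order of $\omega$ is exactly $2p$. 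Hence the inner sum vanishes, the whole product is zero, and $\vec{n}=A^{II,p}\vec{m}$ lies in the kernel of $\hat{S}^{(n)}_q$ for every integer vector $\vec{m}$.

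The main obstacle is the cyclotomic-theoretic step: confirming both the alternating-sign expansion of $\Phi_{2p}$ for every odd prime $p$ and, more importantly, that $\omega$ has order exactly $2p$ rather than some proper divisor. The hypothesis $L=1$ is indispensable here, since a nontrivial common factor between $2q+1$ and $2p$ would collapse $\omega$ to a root of unity of smaller order and leave $\Phi_{2p}(\omega)\neq 0$, breaking the factorization.
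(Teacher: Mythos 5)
Your proof is correct, and the claim it establishes (every $\vec{n}=A^{II,p}\vec{m}$ lies in the kernel) is exactly what the theorem asserts. The skeleton is the same as the paper's: you split the indices into residue classes modulo $N/p$ and evaluate at the same root $\omega=\exp\left(-i\tfrac{\pi}{p}(2q+1)\right)$. The difference is in the key lemma. The paper treats each block as a prime-dimension nega-cyclic equation and invokes Theorem~\ref{theorem:negacylic, prime number} (which in turn goes through the sign-flip reduction to the cyclic prime case and the irreducibility of $\Phi_p$), whereas you factor the full sum globally into a free factor times $\sum_{m=0}^{p-1}(-1)^m\omega^m=\Phi_{2p}(\omega)$ and kill it by showing $\omega$ is a primitive $2p$-th root of unity. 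Your route buys two things: it is a pure sufficiency check, which is all the theorem needs (the paper's phrasing carries an unnecessary ``only if'' flavor within each block), and it makes explicit where $L=1$ enters --- namely that $\gcd(2p,2q+1)$ divides $\gcd(2N,2q+1)=1$ because $p\mid N$, so $\omega$ has order exactly $2p$; the paper leaves this coprimality implicit when it applies the prime-case theorem to the blocks. You also correctly restrict to odd prime factors $p$ (consistent with case (II) in the main text, which requires $p>2$, and with the paper's own proof), and you rightly note that such a $p$ exists precisely because $N$ is not a power of two; for $p=2$ the alternating sum $1-\omega$ does not vanish and the statement would fail, so the restriction is not cosmetic. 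The only minor imprecision is the appeal to Lemma~\ref{lemma: characteristic of cyclo}: you do not need divisibility of a polynomial here, only the defining fact that a primitive $2p$-th root of unity is a root of $\Phi_{2p}$, but that does not affect correctness.
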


\begin{proof}
    Following the proof of Theorem \ref{theorem:cyc,K=1}, for every prime factor $p>2$ of $N$, Eq.~\eqref{eq:appendix negac-cyclic} can be decomposed to $\frac{N}{p}$ polynomials (as in Eq.~\eqref{eq:appendix,cyc,case 2}):
    \begin{subequations}
        \be
        \sum_{l=0}^{\frac{N}{p}-1}e^{-i\frac{2\pi}{2N}(2q+1)l}f_l(e^{-i\frac{2\pi}{2p}(2q+1)}) = 0\ ,
        \ee
    \end{subequations}
    where the polynomial $f_l(\omega)$ is defined in Eq.~\eqref{appendix:f_l}. The integer coefficients of each polynomial must have the form defined in Theorem \ref{theorem:negacylic, prime number} in order for the equation to be solved. 

    Thus, for each prime factor $p$ of $N$, all states of the form $\ket{n_0,\dots,n_{\frac{N}{p}-1},-n_0,\dots,-n_{\frac{N}{p}-1},n_0,\dots}$ span a subspace within the kernel of $\hat{S}^{(n)}_{q}$. These are represented by the constraints matrix $A^{II,p}$ with elements defined in Eq.~\eqref{eq:symmetry,nega,2}.  
\end{proof}

\begin{theorem}
\label{theorem:nega,case 3}
    If $L>1$ and $N=2^mL$ for some positive integer $m$, 
    all states \ls defined according to Eq.~\eqref{eq:nAm} with the constraints matrix $A^{III}$ are within the kernel of $\hat{S}^{(n)}_q$. The constraints matrix $A^{III}$ is $\left(N \times \left( N-\frac{N}{L}\right)\right)-$dimensional and its elements are defined according to Eq.~\eqref{eq:symmetry,nega,3}.
\end{theorem}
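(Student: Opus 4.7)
The plan is to show that every column of $A^{III}$ satisfies the kernel equation \eqref{eq:appendix negac-cyclic}, so that any integer combination $\vec n = A^{III}\vec m$ automatically does.

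First I would characterize the relevant root of unity. Let $\omega := \exp(-i\pi(2q+1)/N)$, so that \eqref{eq:appendix negac-cyclic} reads $\sum_p n_p\,\omega^p = 0$. Writing $2q+1 = L\tilde r$ with $\gcd(\tilde r,\,2N/L)=1$ and using $N=2^m L$ (so $2N/L = 2^{m+1}$), one sees that $\tilde r$ must be odd, hence $\omega^{2^m}=\exp(-i\pi\tilde r)=-1$. Thus $\omega$ is a primitive $2^{m+1}$-th root of unity and its minimal polynomial over $\mathbb Q$ is $\Phi_{2^{m+1}}(x)=x^{2^m}+1$ by Lemma~\ref{lemma: characteristic of cyclo}. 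In particular, $\{1,\omega,\ldots,\omega^{2^m-1}\}$ is $\mathbb Q$-linearly independent.

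Next I would decouple the kernel equation. Writing $p=a+b\cdot 2^m$ with $a\in\{0,\ldots,2^m-1\}$ and $b\in\{0,\ldots,L-1\}$, the identity $\omega^p=(-1)^b\omega^a$ rearranges \eqref{eq:appendix negac-cyclic} into
\[
\sum_{a=0}^{2^m-1}\omega^a\bigg(\sum_{b=0}^{L-1}(-1)^b\,n_{a+b\cdot 2^m}\bigg)=0,
\]
and the linear independence of the $\omega^a$ over $\mathbb Q$ decouples this into $N/L$ integer constraints
\[
\sum_{b=0}^{L-1}(-1)^b\,n_{a+b\cdot 2^m}=0,\qquad a=0,\ldots,2^m-1.
\]
This reduction mirrors the strategy of Theorems~\ref{theorem:cyc,K=1} and \ref{theorem:nega-cyclic,case 2}.

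Then, for each fixed $j$ and $a$, I would evaluate $\sum_{b=0}^{L-1}(-1)^b A^{III}_{a+b\cdot 2^m,\,j}$ using \eqref{eq:symmetry,nega,3}. The shifts $(k-\tfrac{1}{2})N/L$ and $kN/L$ for $k=1,\ldots,(L-1)/2$ are, respectively, odd and even multiples of $N/(2L)=2^{m-1}$, while the constraint samples rows at step $N/L=2^m$. Combining the alternating sign $(-1)^b$ with the signs in \eqref{eq:symmetry,nega,3}, the positive and negative Kronecker deltas contributing to a given column pair up so that the alternating sum collapses to zero, identically in $j$. A dimension check is consistent: $A^{III}$ has $N-N/L$ columns, matching the codimension of the kernel, although Theorem~\ref{theorem:nega,case 3} only asserts the containment $\operatorname{Im}(A^{III})\subseteq\ker\hat{S}_q^{(n)}$.

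The main obstacle will be the bookkeeping in the last step: the shifts by $2^{m-1}$ are finer than the step size $2^m$ appearing in the constraint, so distinct values of $b$ can activate the same row of a given column of $A^{III}$ with potentially different signs. One must track carefully, as $b$ ranges over $\{0,\ldots,L-1\}$, when the sign $(-1)^b$ agrees or disagrees with the sign attached to each delta in \eqref{eq:symmetry,nega,3}, and verify that every contribution meets a cancelling partner. Boundary cases at $k=1$ and $k=(L-1)/2$, as well as the role of the leading $\delta_{i,j}$ term, deserve separate attention.
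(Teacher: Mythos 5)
Your reduction is correct, and it is essentially the paper's own route in different clothing: identifying $\omega=\exp(-i\pi(2q+1)/N)$ as a primitive $2^{m+1}$-th root of unity with minimal polynomial $x^{2^m}+1$ and decoupling Eq.~\eqref{eq:appendix negac-cyclic} into the $N/L$ integer constraints $\sum_{b=0}^{L-1}(-1)^b n_{a+bN/L}=0$ for $a=0,\dots,N/L-1$ is equivalent to the paper's grouping $y_l=\sum_{k\equiv l\ (\mathrm{mod}\ 2N/L)}n_k$ followed by the power-of-two instance of Theorem~\ref{theorem:cyc,K=1}. (Minor slip: $N-N/L$ is the dimension of the kernel, not its codimension.)

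The genuine gap is the final step, which you assert rather than carry out, and which in fact fails as described. In the constraint labelled by $a=j\bmod(N/L)$, a column $j$ of the matrix in Eq.~\eqref{eq:symmetry,nega,3} receives contributions only from the diagonal term and from the deltas shifted by whole multiples $kN/L$, because the rows shifted by the half-multiples $(k-\tfrac12)N/L$ lie in the distinct residue class $\bigl(j+N/(2L)\bigr)\bmod(N/L)$; within class $a$ the column therefore consists of a single $+1$ followed by a run of $-1$'s at successive steps of $N/L$, and the alternating-signed sum of these entries equals $\pm1$ or $\pm2$, never zero. So there is no pairing-up that "collapses to zero identically in $j$". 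Concretely, for $N=6$, $L=3$ (take $q=1$, so $2q+1=3$), the kernel of $\hat S^{(n)}_1$ is cut out by $n_0-n_2+n_4=0$ and $n_1-n_3+n_5=0$, while the first column of Eq.~\eqref{eq:symmetry,nega,3} is $(1,1,-1,0,0,0)^T$, which gives $n_0-n_2+n_4=2$. To complete a proof you would have to do what the paper's proof only sketches: substitute the original coefficients $n_p$ into the periodicity condition on the $y_l$ and read off the constraints matrix from that substitution, which produces columns with exactly two nonzero entries (a $+1$ at row $j$ and $(-1)^{\lfloor jL/N\rfloor+1}$ at row $(j\bmod N/L)+(L-1)N/L$), and then reconcile this with, or use it in place of, the printed form of $A^{III}$. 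As it stands, the column-by-column verification of Eq.~\eqref{eq:symmetry,nega,3} that your plan hinges on cannot be completed.
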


\begin{proof}
By defining an effective dimension $\tilde{N}:=\frac{2N}{L}$ and an effective index $\tilde{q}:=\frac{2q+1}{L}$, Eq.~\eqref{eq:appendix negac-cyclic} reduces to 
\begin{subequations}        
    \be
    \sum_{l=0}^{\tilde{N}-1}y_l\exp\left(-i\frac{2\pi}{\tilde{N}}\tilde{q}l\right)=0\ ,
    \ee
    with
    \be
    y_l := \sum_{k:[k]=l}n_k\ ,
    \ee
\end{subequations}
where $[k]$ is a short-hand notation for $k$ mod $\tilde{N}$. Because the effective dimension is some power of two and $\tilde{K}:=\text{gcd}(\tilde{N},\tilde{q})=1$, the solution is given by Theorem \ref{theorem:cyc,K=1} in terms of the integers $y_l$. 

The substitution by the original integer coefficients $n_l$, alongside the fact that $2$ is the only prime factor of $\tilde{N}$ leads to $\frac{N}{L}$ constraints that are encapsulated by the $\left(N \times \left(N-\frac{N}{L}\right)\right)-$dimensional constraints matrix $A^{III}$ with elements defined in Eq.~\eqref{eq:symmetry,nega,3}.
\end{proof}

\subsection{Cyclic Cases (3) and (5)}
\label{Appendix: cyc, 3,5}

\begin{theorem}
\label{theorem:cyclic,case 3}
For any $K$ such that $\frac{N}{K}$ is even and is not some power of two, each prime factor $p$ of $\frac{N}{2K}$ has an associated class of states \ls that are within the kernel of $\hat{S}^{(c)}_q$. The class of states is defined by the $\left(N \times \left(N-\frac{N}{2Kp}(p-1)\right)\right)$-dimensional constraints matrix $A^{3,p}$ with elements defined in Eq.~\eqref{eq:symmetry,cyclic,3}.   
\end{theorem}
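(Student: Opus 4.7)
\textbf{Proof plan for Theorem \ref{theorem:cyclic,case 3}.} The strategy is to reduce cases (3) and (5) to a nega-cyclic problem of smaller dimension, whose solution is already known from Theorem \ref{theorem:nega-cyclic,case 2}, and then to lift the resulting constraints back through the $K$-fold grouping that was introduced in the proof of Theorem \ref{theorem:cyclic, case 4}.

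First I would reduce the cyclic equation in dimension $N$ to an effective equation in dimension $\tilde{N} := N/K$. Following the derivation of Theorem \ref{theorem:cyclic, case 4}, define $y_l := \sum_{k=0}^{K-1} n_{l + k\tilde{N}}$, so that Eq.~\eqref{eq:appendix cyc} becomes
\begin{equation}
    \sum_{l=0}^{\tilde{N}-1} y_l \exp\!\left(-i \tfrac{2\pi}{\tilde{N}} \tilde{q}\, l\right) = 0,
\end{equation}
with $\tilde{q} = q/K$ coprime to $\tilde{N}$. Since $\tilde{N}$ is even by hypothesis, $\tilde{q}$ must be odd, so I can write $\tilde{q} = 2s+1$ and $\tilde{N} = 2M$ with $M = N/(2K)$. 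The equation then takes the form
\begin{equation}
    \sum_{l=0}^{2M-1} y_l \exp\!\left(-i \tfrac{2\pi}{2M} (2s+1)\, l\right) = 0,
\end{equation}
which is exactly Eq.~\eqref{eq:appendix negac-cyclic} for a nega-cyclic operator $\hat{S}^{(n)}_s$ of dimension $M$. By hypothesis $M$ is not a power of two, and $\gcd(2M, 2s+1) = \gcd(\tilde{N}, \tilde{q}) = 1$, so I am precisely in nega-cyclic case (II) and can invoke Theorem \ref{theorem:nega-cyclic,case 2}.

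Next, for each odd prime factor $p$ of $M = N/(2K)$, that theorem furnishes a family of solutions in the variables $y_l$ of the form $y_l = (-1)^{\lfloor lp/M \rfloor}\, y_{l \bmod (M/p)}$, parameterised by $M/p$ free integers. I would then invert the definition $y_l = \sum_{k=0}^{K-1} n_{l+k\tilde{N}}$ exactly as in the proof of Theorem \ref{theorem:cyclic, case 4}: single out one representative $n$ per residue class modulo $\tilde{N}$ as the $y$-dependent unknown, treat the remaining $n$'s in each class as free integers, and solve. Collecting the resulting $N - \tfrac{N}{2Kp}(p-1)$ independent integer parameters into a vector $\vec{m}$ and matching $\vec{n} = A^{3,p}\vec{m}$ should produce the constraints matrix claimed in Eq.~\eqref{eq:symmetry,cyclic,3}.

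The main obstacle will be the combinatorial bookkeeping of this back-substitution. The nega-cyclic sign $(-1)^{\lfloor lp/M \rfloor}$ and the modular identifications $l \bmod (M/p)$ live on the index set $[0, 2M)$ of the $y_l$, whereas the target matrix $A^{3,p}$ is indexed by $[0, N)$ organised into $K$ blocks of length $\tilde{N} = 2M$. Verifying that the split $[0,\tilde{N}) = [0,M) \cup [M, 2M)$ produces exactly the half-shift $(k+\tfrac{1}{2})N/K$ and the $(-1)^{[j]}$ factor in Eq.~\eqref{eq:symmetry,cyclic,3}, and that the boundary sum $\sum_{l=0}^{K-2}$ correctly accounts for the $k$-wrap-around that distinguishes the chosen representative $n_{[j]}$ from the other $n_{kN/K+j-N}$, is the only place where real work is required; once the bijection between the two index schemes is pinned down, the theorem reduces to reading off the Kronecker-delta expansion of this change of variables.
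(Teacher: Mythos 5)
Your overall route is the paper's: pass to the effective dimension $\tilde N=N/K$ via $y_l=\sum_k n_{l+k\tilde N}$, use that $\tilde q$ is odd to expose a nega-cyclic structure, invoke Theorem~\ref{theorem:nega-cyclic,case 2}, and back-substitute as in Theorem~\ref{theorem:cyclic, case 4}. But there is a concrete misstep at the reduction: the equation $\sum_{l=0}^{2M-1}y_l\exp\left(-i\tfrac{2\pi}{2M}(2s+1)l\right)=0$ with $M=N/(2K)$ is \emph{not} Eq.~\eqref{eq:appendix negac-cyclic} for dimension $M$ — that equation has only $M$ terms. Because $2s+1$ is odd, $\exp\left(-i\tfrac{2\pi}{2M}(2s+1)(l+M)\right)=-\exp\left(-i\tfrac{2\pi}{2M}(2s+1)l\right)$, so you must first fold the index range and set $z_l:=y_l-y_{l+M}$ for $l\in[0,M)$; it is the $M$ quantities $z_l$, not the $y_l$ themselves, that must obey the case-(II) pattern. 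The paper performs exactly this folding, and it is the source of the half-shift columns $\delta_{i+(k+\frac12)\frac{N}{K}-N,j}$ and the sign structure in Eq.~\eqref{eq:symmetry,cyclic,3}.

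Your plan instead imposes the pattern $y_l=(-1)^{\lfloor lp/M\rfloor}y_{l\bmod(M/p)}$ on all $l\in[0,2M)$. Those vectors do lie in the kernel, but the condition over-constrains: it forces $y_{l+M}=-y_l$ exactly, whereas only the differences $y_l-y_{l+M}$ are constrained. You thereby lose $M=\frac{N}{2K}$ free parameters: your lifted family has rank $\frac{M}{p}+N-2M=N-\frac{N}{K}+\frac{N}{2Kp}$, while the theorem's matrix $A^{3,p}$ has $N-\frac{N}{2Kp}(p-1)=N-\frac{N}{2K}+\frac{N}{2Kp}$ columns. So the "matching $\vec n=A^{3,p}\vec m$" step cannot go through as written — you would prove containment of a strictly smaller class, not the class the theorem asserts. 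The fix is precisely the intermediate change of variables $z_l=y_l-y_{l+\tilde N/2}$ before invoking Theorem~\ref{theorem:nega-cyclic,case 2}; with that inserted, the rest of your bookkeeping plan (lifting through the $K$-fold grouping and reading off the Kronecker deltas) coincides with the paper's proof.
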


\begin{proof}
Eq.~\eqref{eq:appendix cyc} can be written as
\begin{subequations}
    \be 
    \label{eq:appendix,case 3, part 1}
    \sum_{l=0}^{\tilde{N}-1}y_l\exp\left(-i\frac{2\pi}{\tilde{N}}\tilde{q}l\right)= 0 \ ,
    \ee 
with $\tilde{N}=\frac{N}{K}$, $\tilde{q}=\frac{q}{K}$ and $y_l = \sum_k n_{l+k\tilde{N}}$. Because $\tilde{N}$ is even and the greatest common divisor $\text{gcd}(\tilde{N},\tilde{q})$ equals unity by definition, $\tilde{q}$ is odd. Thus the equation can be re-written as
    \be 
    \sum_{l=0}^{\frac{\tilde{N}}{2}-1}z_l\exp\left(-i\frac{2\pi}{2\frac{\tilde{N}}{2}}(2q'+1)l\right)=0\ ,
    \ee 
    \end{subequations}
with $z_l=y_l-y_{l+\tilde{N}/2}$ and $\tilde{q}=2q'+1$. The last equation can be recognized as the defining equation for the nega-cyclic symmetry, i.e., Eq.~\eqref{eq:appendix negac-cyclic}, with $L=1$ and $N$ which is not some power of $2$ (i.e., case (II) for nega-cyclic lattices).

Hence, according to Theorem \ref{theorem:nega-cyclic,case 2} each prime factor $p$ of $\frac{\tilde{N}}{2}$ has an associated class of states that lies within the kernel of $\hat{S}^{(c)}_{q}$. Expressing the integer coefficients $z_l$ in terms of the original integer coefficients $n_l$ leads to the symmetry classes being defined in terms of the constraints matrix $A^{3,p}$ for each prime factor $p$ of $\frac{N}{2K}$ with elements defined in Eq.~\eqref{eq:symmetry,cyclic,3}.
\end{proof}

\subsection{Nega-cyclic Case (IV)}
\label{Appendix:nega,4}

\begin{theorem}
\label{theorem:nega,4}
    If $L>1$ and $N\ne2^mL$ for any positive integer $m$, a subspace within the kernel of $\hat{S}^{(n)}_q$ is spanned by all states \ls that obey the relation $\vec{n} = A^{IV,p}\vec{m}$ or $\vec{n} = A^{V,q}\vec{m}$ for an arbitrary vector of integers $\vec{m}$, for all prime factors $p$ of $\frac{2N}{L}$ and all prime factors $q$ of $\frac{N}{L}$.    
\end{theorem}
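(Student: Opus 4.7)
The plan is to mirror the structure of the proofs of Theorems~\ref{theorem:nega,case 3} and~\ref{theorem:nega-cyclic,case 2}: perform two different aggregations of the coefficients $n_p$ in Eq.~\eqref{eq:appendix negac-cyclic} so that, after each aggregation, the defining kernel equation reduces to a case already handled in the preceding theorems. The first aggregation yields the family $A^{IV,p}$, and the second yields the family $A^{V,q}$. For notation, observe that $L=\gcd(2N,2q+1)$ divides the odd integer $2q+1$, so $L$ is odd and consequently divides $N$; hence $N':=N/L$ and $\tilde{N}:=2N/L=2N'$ are integers, and $\tilde{q}:=(2q+1)/L$ is an odd integer, which we write as $2q'+1$. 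By construction $\gcd(\tilde{N},\tilde{q})=1$, and the hypothesis $N\ne 2^mL$ means that $N'$ is not a power of two.

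For the first reduction, rewrite the phase as $\exp(-i\pi p(2q+1)/N)=\exp(-i2\pi p\tilde{q}/\tilde{N})$ and group $p=l+k\tilde{N}$ with $0\le l<\tilde{N}$, turning Eq.~\eqref{eq:appendix negac-cyclic} into
\be
\sum_{l=0}^{\tilde{N}-1} y_l \exp\!\left(-i\frac{2\pi}{\tilde{N}}\tilde{q}\,l\right)=0,\qquad y_l:=\!\!\sum_{k:\,0\le l+k\tilde{N}<N}\!\! n_{l+k\tilde{N}}\ .
\ee
This is a cyclic kernel equation with $\gcd(\tilde{N},\tilde{q})=1$, so Theorem~\ref{theorem:cyc,K=1} applies: for every prime factor $p$ of $\tilde{N}=2N/L$, the aggregate sequence $\{y_l\}$ may be taken periodic with period $\tilde{N}/p$, i.e.\ $y_l=y_{l\%(\tilde{N}/p)}$. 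Back-substituting this periodicity into the definition of $y_l$ yields a linear system on the $n_p$ whose free parameters are enumerated by the columns of $A^{IV,p}$; matching indices recovers Eq.~\eqref{eq:symmetry,nega,4a}.

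For the second reduction, group $p=l+kN'$ with $0\le l<N'$ and $0\le k<L$. Because $\tilde{q}=2q'+1$ is odd, $\exp(-i\pi k(2q'+1))=(-1)^k$, the phase factorises, and Eq.~\eqref{eq:appendix negac-cyclic} becomes
\be
\sum_{l=0}^{N'-1} z_l \exp\!\left(-i\frac{\pi}{N'}l(2q'+1)\right)=0,\qquad z_l:=\sum_{k=0}^{L-1}(-1)^k\, n_{l+kN'}\ .
\ee
This is a nega-cyclic kernel equation in effective dimension $N'$ and effective index $q'$, with $\gcd(2N',2q'+1)=\gcd(\tilde{N},\tilde{q})=1$. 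Because $N'$ is not a power of two, this is precisely Case~(II) and Theorem~\ref{theorem:nega-cyclic,case 2} supplies, for every prime factor $q>2$ of $N'=N/L$, the class of solutions $z_l=(-1)^{\lfloor lq/N'\rfloor}z_{l\%(N'/q)}$. Inverting the change of variables $z_l\mapsto\{n_{l+kN'}\}_{k=0}^{L-1}$ then produces Eq.~\eqref{eq:symmetry,nega,4b}: the sign factor $(-1)^{\lfloor iLq/N\rfloor}$ appearing in $A^{V,q}$ is inherited directly (note $Lq/N=q/N'$), while the alternating $\pm\delta$ terms indexed by integer and half-integer multiples of $\tilde{N}=2N'$ originate from the $(-1)^k$ in the definition of $z_l$ as $k$ ranges through its $L$ values.

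Any integer vector $\vec{n}$ produced by either reduction satisfies Eq.~\eqref{eq:appendix negac-cyclic}, so the corresponding lattice state lies in the kernel of $\hat{S}^{(n)}_q$, and the union of the two families of spans is a subspace of this kernel, as claimed. The principal obstacle is the bookkeeping involved in the back-substitution yielding $A^{V,q}$: one must track how the sign $(-1)^k$ in the definition of $z_l$ combines with the floor-function sign from Theorem~\ref{theorem:nega-cyclic,case 2} across all $L$ residue classes, verifying that the resulting relations on the $N$ integers $n_p$ reduce to exactly the number of free parameters encoded in the stated dimensions $N\times\!\left(\tfrac{2N}{L}-\tfrac{2N}{Lp}\right)$ and $N\times\!\left(\tfrac{N}{L}-\tfrac{N}{Lq}\right)$ of $A^{IV,p}$ and $A^{V,q}$ respectively.
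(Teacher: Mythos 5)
Your proposal is correct and follows essentially the same route as the paper: aggregate the coefficients into effective variables, invoke the previously proved cyclic and nega-cyclic cases with effective dimension $\tilde{N}=2N/L$ and odd effective index $\tilde{q}=(2q+1)/L$, and back-substitute to obtain $A^{IV,p}$ and $A^{V,q}$. The only cosmetic difference is that the paper reaches the second family by citing cyclic case (3) (whose own proof passes through the combination $z_l=y_l-y_{l+\tilde{N}/2}$ and nega-cyclic case (II)), whereas you inline that step by aggregating directly with alternating signs $(-1)^k$ over period $N/L$ — the resulting $z_l$ are identical.
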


\begin{proof}
By defining the effective dimension as $\tilde{N}=\frac{2N}{L}$ and the effective index as $\tilde{q}=\frac{2q+1}{L}$, Eq.~\eqref{eq:appendix negac-cyclic} reduces to 
\begin{subequations}    
\be 
    \sum_{l=0}^{\tilde{N}-1}y_l\exp\left(-i\frac{2\pi}{\tilde{N}}\tilde{q}l\right)=0
\ee 
with 
\be
    y_l = \sum_{[k]=l}n_k\ ,
\ee
\end{subequations}
where $[k]$ is a short hand notation for $k$ mod $\tilde{N}$. Because the greater common divisor of $\tilde{N}$ and $\tilde{q}$ is unity by definition (i.e., $\tilde{K}:=\text{gcd}(\tilde{N},\tilde{q})=1$) and $\tilde{N}$ is even but not some power of two, the solutions to the above equation are given by cyclic case (3) in terms of the integer coefficients $y_l$. Substituting the original coefficients $n_l$ and expressing the solution in terms of constraints matrices, results in the two types of constraints matrices.

First, for every prime factor $p$ of $\frac{2N}{L}$, the $\left(N \times \left(\frac{2N}{L} - \frac{2N}{Lp}\right)\right)$-dimensional constraints matrix $A^{IV,p}$ is given explicitly by Eq.~\eqref{eq:symmetry,nega,4a}. Secondly, for each prime factor $q>2$ of $\frac{N}{L}$ the constraints matrix $A^{V,q}$ with dimensions $\left(N \times \left(\frac{N}{L} - \frac{N}{Lq}\right)\right)$ is given by Eq.~\eqref{eq:symmetry,nega,4b}. A subspace within the kernel of $\hat{S}^{(n)}_q$ is spanned by all states \ls that obey the relation $\vec{n} = A^{IV,p}\vec{m}$ or $\vec{n} = A^{V,q}\vec{m}$ for an arbitrary vector of integers $\vec{m}$.
\end{proof}

\bibliographystyle{apsrev4-2}
\bibliography{bib}
\end{document}